\documentclass[10pt,journal,epsfig]{IEEEtran}

\usepackage[dvips]{graphicx}
\usepackage{graphicx}
\usepackage{amssymb}
\usepackage{cite}
\usepackage{subfigure}
\usepackage{amsmath}

\usepackage{subeqnarray}
\usepackage{cases}
\usepackage[centerlast]{caption2}
\usepackage{color}

%\usepackage[Symbol]{upgreek}
%\usepackage{booktabs}%用来控制表格的横线
%\usepackage{multirow}
%\usepackage{bigdelim}
%\usepackage{bigstrut}
%\usepackage{enumitem}

%\makeatletter
%\newcommand{\rmnum}[1]{\romannumeral #1}
%\newcommand{\Rmnum}[1]{\expandafter\@slowromancap\romannumeral #1@}
%\newcommand{\tabincell}[2]{\begin{tabular}{@{}#1@{}}#2\end{tabular}}
%\makeatother

\begin{document}

\IEEEoverridecommandlockouts
\title{%On Exploiting Full-Duplex to Improve Secret Communications Over a MIMO Wiretap Channel
Linear Precoder Design for a MIMO Gaussian Wiretap Channel with Full-Duplex Source and Destination Nodes
%On Maximum Linearly Achievable Secrecy Degrees of Freedom of a MIMO Two-Way Wiretap Channel
%Full-Duplex \emph{Alice} and \emph{Bob} in the MIMO Gaussian Wiretap Channel: Scheme and Performance  MIMO Two-Way
}
\author{Lingxiang Li, Zhi Chen, ~\IEEEmembership{Member,~IEEE}, \\ Athina P. Petropulu, ~\IEEEmembership{Fellow,~IEEE}, and Jun Fang, ~\IEEEmembership{Member,~IEEE}
%Lingxiang Li, Athina P. Petropulu, ~\IEEEmembership{Fellow,~IEEE}, Zhi Chen, ~\IEEEmembership{Member,~IEEE},
%and Jun Fang, ~\IEEEmembership{Member,~IEEE}
\thanks{Lingxiang Li and Zhi Chen and Jun Fang are with the National Key Laboratory of Science and Technology on Communications,
University of Electronic Science and Technology of China, Chengdu 611731, China (e-mails:lingxiang.li@rutgers.edu; \{chenzhi, JunFang\}@uestc.edu.cn).
The work was performed when L. Li was a visiting student at Rutgers University.}
\thanks{A. P. Petropulu is with the Department of Electrical and Computer Engineering, Rutgers--The State University of New Jersey, New Brunswick, NJ 08854 USA (e-mail: athinap@rci.rutgers.edu).}
%\thanks{This work was supported in part by the National Natural Science Foundation of China under Grant 61571089, and by the High-Tech Research and
%Development (863) Program of China under Grand 2015AA01A707. %The corresponding author is Dr. Zhi Chen.
%}
}

%\title{On the Secrecy Capacity of a MIMO Gaussian Wiretap Channel with a Cooperative Jammer}
%\author{ %Author 1, Author 2, Author 3, Author 4
%Lingxiang Li, Zhi Chen, Jun Fang, ~\IEEEmembership{Member,~IEEE},
%and \\ Athina P. Petropulu, ~\IEEEmembership{Fellow,~IEEE}
%\thanks{Lingxiang Li, Zhi Chen, and Jun Fang are with the National Key Laboratory of Science and Technology on Communications,
%University of Electronic Science and Technology of China, Chengdu 610054, China (e-mails:lingxiang\_li\_uestc@hotmail.com; \{chenzhi, JunFang\}@uestc.edu.cn)}
%\thanks{A. P. Petropulu is with the Department of Electrical and Computer Engineering, Rutgers--The State University of New Jersey, New Brunswick, NJ 08854 USA (e-mail: athinap@rci.rutgers.edu).}
%\thanks{This work was supported in part by the Important National Science and
%Technology Specific Projects of China under Grant 2014ZX03004003, and by
%the Sichuan Province Project under Grant 2012FZ0119.}
%}

\maketitle

\begin{abstract}
We consider linear precoder design for a multiple-input multiple-output (MIMO) Gaussian wiretap channel, which
comprises two legitimate nodes, i.e., \emph{Alice} and \emph{Bob}, operating in Full-Duplex (FD) mode and exchanging
confidential messages in the presence of a passive eavesdropper.
Using the sum secrecy degrees of freedoms (sum S.D.o.F.) as
reliability measure, we formulate an optimization problem with respect to the precoding matrices.
In order to solve this problem, we first propose
a cooperative secrecy transmission scheme, and prove that its
feasible set is sufficient to achieve the maximum sum S.D.o.F.. Based on that feasible set,
we then determine the maximum achievable sum S.D.o.F. in closed form, and provide a method for constructing the
precoding matrix pair which achieves the maximum sum S.D.o.F..
Results show that, %apart from the advantage of higher spectral efficiency,
%the FD based network also provides a good structure in terms of keeping messages secret.
the FD based network provides an attractive secrecy transmission rate performance.

%We consider the linear precoder design for a multiple-input multiple-output (MIMO) Gaussian wiretap channel, which
%comprises two legitimate nodes, i.e., \emph{Alice} and \emph{Bob}, operating in Full-Duplex (FD) mode and exchanging
%confidential messages in the presence of a passive eavesdropper, who is interested in these confidential messages.
%The goal is to improve both reliability and
%spectrum efficiency. Since the sum secrecy rate maximization problem requires to face
%a nonlinear fractional problem which is generally difficult to solve, we consider the
%sum secrecy degrees of freedoms (sum S.D.o.F.) maximization problem as a surrogate.
%In order to solve this sum S.D.o.F. maximization problem,
%a cooperative secrecy transmission scheme is first proposed, whose
%feasible set is proved to be sufficient to achieve the maximum sum S.D.o.F.. By
%studying the feasible set of the proposed transmission scheme,
%we determine the maximum achievable sum S.D.o.F. in close form, and provide a method for constructing the
%precoding matrix pair which achieves the maximum sum S.D.o.F..
\end{abstract}

\begin{keywords}
Physical-layer security, Cooperative communications, Multi-input Multi-output, Full-duplex.
\end{keywords}

\section{Introduction}

Full-duplex (FD) has attracted intensive attention in the past few years since it has the potential to double the spectral efficiency.
Due to the challenges in suppressing self-interference,
which is inherent to FD, wireless communication
systems have largely avoided FD up to recently. However,
as short-range systems with low-power transmitters such as
small-cell systems and WiFi are becoming dominant, there has been
renewed interest in FD, since self-interference in such systems
is more manageable \cite{Sabharwal14,Lingyang15}.

Recently, FD nodes were used in the context of physical layer secrecy.
One line of research considers an FD receiver (\emph{Bob}) \cite{Wei12,Gaojie15,Fengchao16}, who
transmits jamming signals, which overlap in time and frequency with the source's (\emph{Alice}) signal.
Introducing jamming signals in the from of artificial noise
is an effective way to improve secrecy \cite{Negi05,qiang13,Tang08,LunDong10,Zheng11,Han11,Swindlehurst11,zheng151,Lingxiang16}, since
jamming signals can be designed to degrade the eavesdropper's (\emph{Eve}) channel without hurting the legitimate channel.
%thus allowing secret communications even when \emph{Eve}'s channel has much better quality.
Typically, the jamming signals are transmitted by the transmitter \cite{Negi05,qiang13}, or
from external helpers \cite{Tang08,LunDong10,Zheng11,Han11,Swindlehurst11,zheng151,Lingxiang16}. %the former one has advantage over the latter one, since with it the message and jamming
%signals experience different channels, which provides channel diversity and thus improves the secrecy rate performance.
Jamming by the transmitter does not have to rely on external helpers,
who may not be trustworthy, or maybe be moving and thus hard to keep track of.
%jamming by external helpers can achieve a higher achievable secrecy rate,
%since with external helpers the message and jamming signals experience different channels;
%jamming by an FD receiver enjoys both of the above merits.
Multi-antenna techniques have been used to further boost the
potential benefits of using an FD \emph{Bob}. Specifically,
\cite{Gan13,Yongkai14,Yongkai14spl} proposed algorithms for maximizing the secrecy rate
over the covariance matrix of jamming signals,
while \cite{LLX16} studied the maximum secrecy degrees
of freedom (S.D.o.F.), and uncovered its connection to the number of antennas at each node.

%Using FD \emph{Bob} to transmit jamming signals can substantially improve the achievable secrecy rate, however,
%the essential benefits of FD in doubling the spectral efficiency has not been explored yet. Therefore,
%another line of research considers both FD \emph{Alice} and \emph{Bob} \cite{Yaping15,Renhai16,Cepheli14}.
%In particular, both \emph{Alice} and \emph{Bob} transmit and receive at the same time and on the same frequency band.
%As compared with the scenario discussed in the last paragraph,
%which considered \emph{Bob} transmitting non-message carrying signals,
%the works \cite{Yaping15,Renhai16,Cepheli14} considered \emph{Bob} transmitting message carrying signals.
%Therefore, an FD bidirectional communication network comes into being, which
%offers the opportunity to double the spectral efficiency. In addition,
%the FD bidirectional communication network also provides a good structure in terms of degrading \emph{Eve}'s channel,
%since it includes co-channel as an inherent property, which, as shown in some very recent works \cite{Kalantari15,Lv15,Lingxiang162,Xie15,Xie142,Koyluoglu112,Tung15}, can act as an alternative to
%jamming signals for the purpose of degrading \emph{Eve}'s channel and bring benefits to
%the overall network throughput.

Using an FD transmitter as well as an FD receiver has the potential to substantially
improve the achievable secrecy rate.
Further, when both transmitter and receiver have multiple antennas, the transmission of a
given node during the reception of information can be designed to act as jamming signal and degrade \emph{Eve}'s channel.
This is considered in \cite{Yaping15,Renhai16,Cepheli14}, where the bidirectional communication
%in addition to offering the opportunity to double the spectral efficiency, also
creates co-channel interference (CCI), which can act as an alternative to jamming
for the purpose of degrading \emph{Eve}'s channel \cite{Kalantari15,Lv15,Lingxiang162,Xie15,Xie142,Koyluoglu112,Tung15}.
In particular, the works \cite{Yaping15,Renhai16} assumed that each terminal receives with a single antenna, and proposed
algorithms to find a beamforming design that maximizes the achievable secrecy rate;
the work of \cite{Cepheli14} assumed that each terminal transmits and receives with multiple antennas,
and proposed algorithms to find the beamforming design that minimizes the transmit power subject to
certain quality of service (QoS) requirements.

In this paper, we consider the general multi-input and multi-output (MIMO) Gaussian wiretap channel as in \cite{Cepheli14}, i.e.,
a network comprising two FD legitimate nodes \emph{Alice} and \emph{Bob},
and a passive eavesdropper \emph{Eve}. Unlike \cite{Cepheli14}, which assumes that each transmitter
sends a single signal stream and tries to minimize the transmit power subject to
certain QoS requirements, we consider the multiple signal streams case and
our goal is to maximize the achievable sum secrecy rate via the proper design of precoding matrices.
Due to the self-interference, the achievable secrecy rate of each link is a nonlinear fractional function of the precoding matrices.
This makes the sum secrecy rate maximization problem a difficult problem to solve.
Instead, we consider the sum S.D.o.F. as a surrogate, % for the sum secrecy rate,
i.e., the rate at which the achievable sum secrecy rate scales with ${\rm log}(P)$ in the high signal-to-noise ratio (SNR) regime.

Our main contributions are summarized below.

We propose a design for the precoding matrices of \emph{Alice} and \emph{Bob}, with which the maximum
sum S.D.o.F. is achieved.
This is achieved in the following steps. First, we propose a cooperative secrecy transmission scheme, in which the
message signals from \emph{Alice} and \emph{Bob} are aligned along the same received subspace of \emph{Eve}.
We then prove that the maximum sum S.D.o.F. can be achieved by the precoding matrices that include the largest possible number of
precoding vectors  produced by the proposed scheme, which are linearly independent and interference free.
%Then we construct the optimal precoding matrices as follows.
Subsequently, we divide the candidate precoding vector pairs %which meet the requirements of the proposed scheme
into several subsets, based on their potential to
achieve a greater sum S.D.o.F.. For each subset, we provide the number of linearly independent pairs and their mathematical description.
Finally, we give an algorithm (see Table II)
for selecting the precoding pairs from the various subsets, so that the sum S.D.o.F. is maximized.
We also determine the maximum achievable sum S.D.o.F. as a function of the number of antennas
(see equations (\ref{eq40})-(\ref{eq49})). %(\ref{eq43})(\ref{eq46})
%We also give a method for constructing the precoding matrix pair (please see Table II), which achieves the maximum sum S.D.o.F..
Our analytical results show exactly how the sum secrecy rate depends on the number of antennas at \emph{Alice}, \emph{Bob} and \emph{Eve}.

%On the surface, the signal model of the FD bidirectional confidential communication looks similar to
%that of the two-user wiretap channel of \cite{Lingxiang162}, except
%that \emph{Eve} has interest in both sources signals. The latter fact has big impacts, however, since it implies that,
%the S.D.o.F. region maximization problem of the FD bidirectional wiretap channel requires to face
%two nonlinear fractional problems (each corresponds to the secrecy rate of a wiretap channel).
%In fact, in this paper we study a different problem, i.e., the sum S.D.o.F. maximization problem.

In \cite{Lingxiang162}, we determined the maximum achievable S.D.o.F. region of a two-user wiretap channel with a source destination pair exchanging confidential messages, another pair exchanging public messages, and a passive eavesdropper who is interested in the communications of the former pair.
In this paper, while the methodology is similar to that of \cite{Lingxiang162}, the problem is different because, unlike \cite{Lingxiang162},
in this paper, \emph{Eve} has interest in both source signals.
This makes the S.D.o.F. region maximization problem significantly more difficult. In particular, the problem becomes equivalent to
two nonlinear fractional problems (each corresponding to the secrecy rate of a wiretap channel).
Therefore, the S.D.o.F. region maximization problem is more complicated and the result cannot be obtained through a straightforward extension of
\cite{Lingxiang162}.

%based on which we propose an algorithm for selecting the precoding vector pairs form the the various
%subsets.

%We further prove that the proposed scheme is sufficient to achieve the maximum sum S.D.o.F..

%The works \cite{Xie15,Xie142,Koyluoglu112,Tung15} consider the scenario of a $K$-user interference channel (IFC),
%in which the users wish to establish secure communication against an eavesdropper.

The rest of this paper is organized as follows. In
Section II, we describe the system model
and formulate the sum S.D.o.F. maximization problem. In Section III, we propose a
secrecy cooperative transmission scheme, and prove that its feasible set is sufficient
to achieve the maximum sum S.D.o.F.. In Section IV, we divide the feasible set of precoding vectors %of the proposed secrecy cooperative transmission scheme
into several subsets.
For each subset, we derive the formulas of the precoding vectors and
determine the number of linearly independent candidate precoding vectors.
%based on which, in
In Section V, we give the maximum achievable sum S.D.o.F. as a function of the number of antennas,
and we also provide a method for constructing the precoding matrix pair which achieves the maximum sum S.D.o.F..
Numerical results are given in Section VI and conclusions are drawn in Section VII.

\textit{Notation:}
$x\sim\mathcal{CN}(0,\Sigma)$ means $x$ is a random variable following a complex circular Gaussian
distribution with mean zero and covariance $\Sigma$; $\lfloor a\rfloor$ denotes the largest integer which is less or equal to $a$;
$(a)^+ \triangleq \max \{a,0\}$;
$\min^+\{a, b\} \triangleq (\min\{a, b\})^+$.
%$|a|$ is the absolute value of $a$;
We use lower case bold to denote vectors;
$\mathbb{C}^{N \times M}$ indicates a ${N \times M}$ complex matrix set;
${\bf{A}}^T$, ${\bf{A}}^H$, $\rm{tr}\{\bf{A}\}$,
$\rm{rank}\{\bf{A}\}$, and $|{\bf{A}}|$ stand for the transpose, hermitian transpose, trace,
rank and determinant of the matrix $\bf{A}$, respectively; ${\bf A}(:,j) $
indicates the $j$-th column of $\bf A$; % while and ${\bf A}(:,i:j) $ denotes the columns from $i$ to $j$ of $\bf A$;
${\rm {span}}({\bf A})$ and ${\rm {span}}({\bf A})^\perp$ are the subspace spanned by
the columns of $\bf A$ and its orthogonal complement, respectively;
${\rm {dim}}\{{\rm {span}}(\bf A)\}$ represents
the number of dimension of the subspace spanned by the columns of $\bf A$;
${\rm {null}}({\bf A})$ denotes the null space of ${\bf A}$;
${\bf \Gamma}({\bf A})$ denotes the orthogonal basis of ${\rm{null}}({\bf A})$;
${\bf A}^\perp$ denotes the orthogonal basis of ${\rm{null}}({\bf A}^H)$; ${\rm {span}}({\bf A})\cap{\rm {span}}({\bf B})$
denotes the intersection of the subspaces ${\rm {span}}({\bf A})$ and ${\rm {span}}({\bf B})$;
{${\rm {span}}({\bf A})\setminus{\rm {span}}({\bf B})\triangleq\{{{\bf{x}}|{\bf{x}}\in{\rm {span}}({\bf A}),
{\bf{x}} \notin {\rm {span}}({\bf B})}\}$}.
${\bf I}$ represents an identity matrix with appropriate size.
We denote by ${\mathcal I} \succ {\mathcal J}$ to indicate that we pick
precoding vector pairs from ${\mathcal I}$ prior to ${\mathcal J}$;
we denote by ${\mathcal I} = {\mathcal J}$ to indicate that
%the relative rankings of ${\mathcal I}$ and ${\mathcal J}$
%will not affect the achievable sum S.D.o.F., i.e.,
we can select precoding vector pairs from
${\mathcal I}$ and ${\mathcal J}$ without any specific constraints on rankings.
%\textcolor{blue}

\newtheorem{proposition}{Proposition}
\newtheorem{theorem}{Theorem}
\newtheorem{corollary}{Corollary}
\newtheorem{lemma}{Lemma}

\begin{figure}[!t]
\centering
\includegraphics[width=3in]{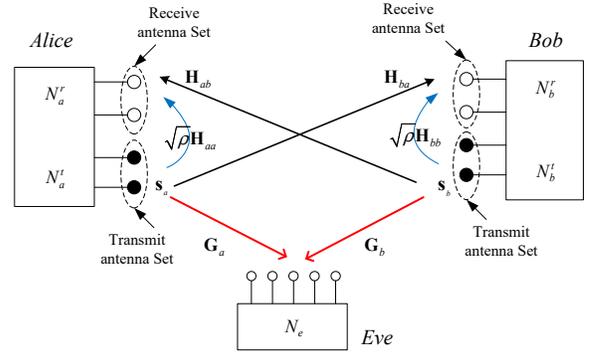}
 %where an .eps filename suffix will be assumed under latex,
% and a .pdf suffix will be assumed for pdflatex; or what has been declared via
\DeclareGraphicsExtensions. \caption{A MIMO FD bidirectional wiretap channel.}
\vspace* {-12pt}
\end{figure}
\section{System Model and Problem Statement}
We consider a MIMO Gaussian wiretap channel (see Fig. 1), which consists of two
legitimate transceivers, \emph{Alice} and \emph{Bob}, who want to exchange information,
and an external passive eavesdropper, \emph{Eve}, who has interest in the message signals sent by both \emph{Alice} and \emph{Bob}.
\emph{Alice} and \emph{Bob} are equipped with $N_a$ and
$N_b$ antennas, respectively. \emph{Eve} is equipped with $N_e$ antennas.
To enable simultaneous information exchange, both \emph{Alice} and \emph{Bob} operate in FD mode, i.e.,
each of them is equipped with two groups of RF chains and corresponding
antennas, one for transmitting and one for receiving.
Specifically, \emph{Alice} allocates $N_a^r$ antennas to receive
and the remaining $N_a^t=N_a-N_a^r$ antennas to transmit.
\emph{Bob} allocates $N_b^r$ antennas to receive
and the remaining $N_b^t=N_b-N_b^r$ antennas to transmit.
We denote by ${\bf s}_a\sim \mathcal{CN}(\bf{0},\bf{I})$ and
${\bf s}_b\sim \mathcal{CN}(\bf{0},\bf{I})$ the message signals sent by \emph{Alice} and \emph{Bob}, respectively.
Both signals are transmitted simultaneously and over the same frequency spectrum.
Such transmission leads to self-interference, e.g., \emph{Bob} will
also see the signals sent from its own transmit antennas, i.e., ${\bf s}_b$, whose intended receiver is \emph{Alice}.
There are various self-interference cancelation techniques such as antenna isolation, analog-circuit-domain based methods and
digital-domain based methods, however, today's state-of-the-art cannot achieve full self-interference cancelation \cite{Sabharwal14}.
To describe the effect of residual self-interference
we employe the loop interference model as in \cite{Gan13}, with parameter
$\rho=0$ corresponding to the no residual self-interference case and $0<\rho \le 1$ corresponding to different residual self-interference levels.
The signal received at \emph{Alice} and \emph{Bob} can thus be expressed respectively as
\begin{subequations}
\begin{align}
&{{\bf y}_a} =\sqrt{\rho} {\bf{H}}_{aa}{\bf V}_a{\bf s}_a + {{\bf{H}}_{ab}}{{\bf V}_b{\bf s}_b} + {{\bf{n}}_a }, \label{eq2}\\
&{{\bf y}_b} = {\bf{H}}_{ba}{\bf V}_a{\bf s}_a + \sqrt{\rho}{{\bf{H}}_{bb}}{{\bf V}_b{\bf s}_b} + {{\bf{n}}_b }. \label{eq1}
\end{align}
\end{subequations}
The signal received at \emph{Eve} can be expressed as
\begin{align}
{{\bf{y}}_e} = {{\bf G}_a}{\bf V}_a{\bf s}_a + {{\bf G}_b}{{\bf V}_b{\bf s}_b} + {{\bf{n}}_e}\textrm{.} \label{eq0}
\end{align}
Here, ${\bf V}_a$ and ${\bf V}_b$ are the precoding matrices at \emph{Alice} and \emph{Bob}, respectively;
${{\bf{n}}_a} \sim \mathcal{CN}(\bf{0},\bf{I}) $, ${{\bf{n}}_b} \sim \mathcal{CN}(\bf{0},\bf{I}) $
and ${{\bf{n}}_e} \sim \mathcal{CN}(\bf{0},\bf{I})$ are independent AWGN vectors, and
represent the measurement noise at \emph{Alice}, \emph{Bob} and \emph{Eve}, respectively;
${\bf{H}}_{ba}\in\mathbb{C}^{N_b^r \times N_a^t}$ denotes the channel matrix from
\emph{Alice} to \emph{Bob};
${\bf{H}}_{ab}\in\mathbb{C}^{N_a^r \times N_b^t}$ denotes the channel matrix from \emph{Bob} to \emph{Alice};
${\bf{H}}_{aa}\in\mathbb{C}^{N_a^r \times N_a^t}$ and ${\bf{H}}_{bb}\in\mathbb{C}^{N_b^r \times N_b^t}$ are self-interfering matrices;
${\bf G}_a\in\mathbb{C}^{N_e \times N_a^t}$ and ${\bf G}_b\in\mathbb{C}^{N_e \times N_b^t}$ denote the channel matrix from
\emph{Alice} and \emph{Bob} to \emph{Eve}, respectively.

In this paper, we make the following assumptions:
\begin{enumerate}
%\item Since today's state-of-the-art in FD communication cannot achieve full cancelation of
%the self-interference \cite{Sabharwal14}, we assume incomplete self-interference cancelation and we
%include the effects of residual self-interference in our system model.
\item The messages ${\bf s}_a$ and ${\bf s}_b$ are independent of each other,
and independent of the noise vectors.
The receivers do not have the capability of multiple-user decoding and they will treat the interference simply as noise.
\item All the channels are flat fading and independent of each other; the corresponding channel matrices are full rank. Global channel state
information (CSI) is available at the legitimate nodes, including the CSI of \emph{Eve}. This is possible in situations in which
\emph{Eve} is a passive network user and its whereabouts and behavior can be monitored.
\end{enumerate}

For a given precoding matrix pair $({\bf V}_a, {\bf V}_b)$,
the maximum achievable rate at the legitimate receiver
and the eavesdropper can be respectively expressed as \cite{OggierBabak11}
\begin{subequations}
\begin{align}
& R_a = {\rm {log}}|{\bf I}+({\bf I}+\rho{\bf{H}}_{bb}{\bf Q}_b{\bf{H}}_{bb}^H)^{-1}{\bf{H}}_{ba}{\bf Q}_a{\bf{H}}_{ba}^H|, \label{eq3a}\\
& R_b = {\rm {log}}|{\bf I}+({\bf I}+\rho{\bf{H}}_{aa}{\bf Q}_a{\bf{H}}_{aa}^H)^{-1}{\bf{H}}_{ab}{\bf Q}_b{\bf{H}}_{ab}^H|, \label{eq3b}\\
& R_a^e = {\rm {log}}|{\bf I}+({\bf I}+{\bf G}_b{\bf Q}_b{\bf G}_b^H)^{-1}{\bf G}_a{\bf Q}_a{\bf G}_a^H|,\label{eq3c}\\
& R_b^e = {\rm {log}}|{\bf I}+({\bf I}+{\bf G}_a{\bf Q}_a{\bf G}_a^H)^{-1}{\bf G}_b{\bf Q}_b{\bf G}_b^H|, \label{eq3d}
\end{align}
\end{subequations}
where ${\bf{Q}}_a\triangleq{\bf V}_a{\bf V}_a^H$ and ${\bf Q}_b\triangleq{\bf V}_b{\bf V}_b^H$ denote the transmit
covariance matrices of \emph{Alice} and \emph{Bob}, respectively.

%In decentralized networks, it is reasonable to assume
%that the legitimate receivers do not have sophisticated multi-user
%decoding capabilities, and thereby treat interference simply as
%noise. For the eavesdroppers, however, one should typically not
%make such strong assumptions, since the capabilities of eavesdroppers are often unknown.
%This has led researchers to design for a worst-case scenario, whereby the
%eavesdroppers are assumed to be capable of performing multiuser decoding.

Correspondingly, the achievable S.D.o.F. is \cite{Liang09}
\begin{align}
d_s^i\triangleq \mathop{\lim }\limits_{ P \to \infty }{R_s^i}/{{\rm log} \ P},\ i =a, b, \label{eq6}
\end{align}
where $P$ denotes the transmit power budget, $R_s^i$ the secrecy rate which equals
\begin{align}
R_s^i\triangleq (R_i-R_e^i)^+. \label{eqSR}
\end{align}

Let the maximum achievable sum S.D.o.F. over the precoding matrices be
\begin{align}
d_s^{\rm sum}  \triangleq &\mathop {\max }\nolimits_{({\bf V}_a, {\bf V}_b) \in {\mathcal I} }(d_s^a+d_s^b), \label{eq5}
\end{align}
with ${\mathcal I}\triangleq \{({\bf V}_a, {\bf V}_b)|{\rm{tr}} \{ {\bf V}_a{\bf V}_a^H\}=P, {\rm{tr}} \{ {\bf V}_b{\bf V}_b^H\}= P\}$.
In this paper,
we aim to determine $d_s^{\rm sum}$ as a function of the number of antennas,
and thus provide some insight into the potential benefits that can be brought by FD operations.
To that objective, in the following sections, we will first introduce a cooperative
transmission scheme which can achieve the maximum sum S.D.o.F..
Subsequently, by studying the cooperative
transmission scheme, we will determine $d_s^{\rm sum}$ in closed form
and also provide the precoding matrix pair which achieves the sum S.D.o.F. of $d_s^{\rm sum}$.

\section{Cooperative Secrecy Transmission Scheme}
Before proceeding, please refer to Appendix A for some mathematical background on generalized
singular value decomposition (GSVD), which provides a mathematical basis for the text
to follow.

\begin{lemma}
For any given precoding matrices $({\bf V}_a,{\bf V}_b)\in {\mathcal I}$
the achieved S.D.o.F. can be re-expressed as follows:
\begin{subequations}
\begin{align}
&d_s^a({\bf V}_a,{\bf V}_b)= m_1({\bf V}_a,{\bf V}_b)-n_1({\bf V}_a,{\bf V}_b), \label{eqa2a}\\
&d_s^b({\bf V}_a,{\bf V}_b)= m_2({\bf V}_a,{\bf V}_b)-n_2({\bf V}_a,{\bf V}_b), \label{eqa2b}
\end{align}
\end{subequations}
in which
\begin{align}
&m_1({\bf V}_a,{\bf V}_b)\triangleq  {\rm{dim}}\{{\rm{span}}({\bf{H}}_{ba}{{\bf V}_a})\setminus{\rm{span}}({\bf{H}}_{bb}{\bf V}_b)\},  \nonumber \\
&n_1({\bf V}_a,{\bf V}_b)\triangleq {\rm{dim}}\{{\rm{span}}({\bf G}_a{{\bf V}_a})\setminus{\rm{span}}({\bf G}_b{\bf V}_b)\}, \nonumber \\
&m_2({\bf V}_a,{\bf V}_b)\triangleq {\rm{dim}}\{{\rm{span}}({\bf{H}}_{ab}{\bf V}_b)\setminus{\rm{span}}({\bf{H}}_{aa}{{\bf V}_a})\}, \nonumber \\
&n_2({\bf V}_a,{\bf V}_b)\triangleq {\rm{dim}}\{{\rm{span}}({\bf G}_b{\bf V}_b)\setminus{\rm{span}}({\bf G}_a{{\bf V}_a})\}. \nonumber
\end{align}
\end{lemma}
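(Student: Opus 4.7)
The plan is to read off the S.D.o.F.\ from definition (\ref{eq6}) by computing the coefficient of $\log P$ in the high-SNR expansion of each of the four rate expressions (\ref{eq3a})--(\ref{eq3d}) separately. Writing $V_a=\sqrt{P}\,\bar{V}_a$ and $V_b=\sqrt{P}\,\bar{V}_b$ with $\bar{V}_a,\bar{V}_b$ of fixed unit Frobenius norm, each of (\ref{eq3a})--(\ref{eq3d}) fits the common shape $\log|I+(I+BB^H)^{-1}AA^H|$ with $A=\sqrt{P}\,\bar{A}$ and $B=\sqrt{P}\,\bar{B}$ both scaled by $\sqrt{P}$.

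The key technical claim is that, for such $A,B$,
\begin{align*}
\lim_{P\to\infty}\frac{\log|I+(I+BB^H)^{-1}AA^H|}{\log P}=\dim\bigl(\mathrm{span}(\bar{A})\setminus\mathrm{span}(\bar{B})\bigr).
\end{align*}
I would prove this using the GSVD recalled in Appendix~A applied to the pair $(\bar{A},\bar{B})$: the GSVD simultaneously diagonalizes the two Gram matrices, so that the log-det decomposes into a sum over generalized singular-value pairs $(\alpha_i,\beta_i)$. Three cases arise: if $\beta_i=0$ and $\alpha_i\neq 0$, the term contributes $\log P+O(1)$; if $\beta_i\neq 0$ and $\alpha_i\neq 0$, the $P$-factors in numerator and denominator cancel and the term is $O(1)$; if $\alpha_i=0$, the term is $0$. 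The count of the first case is exactly $\mathrm{rank}(\bar{A})-\dim(\mathrm{span}(\bar{A})\cap\mathrm{span}(\bar{B}))$, which by the paper's definition equals $\dim(\mathrm{span}(\bar{A})\setminus\mathrm{span}(\bar{B}))$.

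Applying this identity to (\ref{eq3a})--(\ref{eq3d}), and noting that premultiplication by the fixed nonzero constant $\sqrt{\rho}$ and by the full-rank deterministic channels preserves column spans, the pre-log factors of $R_a,R_b,R_a^e,R_b^e$ come out as $m_1,m_2,n_1,n_2$, respectively. Since each rate is $O(\log P)$, the limit in (\ref{eq6}) commutes with the positive-part operator in (\ref{eqSR}), yielding $d_s^a=(m_1-n_1)^+$ and $d_s^b=(m_2-n_2)^+$, which reduce to (\ref{eqa2a})--(\ref{eqa2b}) within the feasible regime of interest where the differences are nonnegative. The main obstacle is the GSVD case analysis---specifically verifying that the ``interference-aligned'' generalized singular values ($\beta_i\neq 0$) indeed produce only an $O(1)$ contribution once the $P$-factors cancel; the remaining steps (factoring out $\sqrt{P}$, span preservation under full-rank premultiplication, and the interchange of limit with the positive-part operator) are routine.
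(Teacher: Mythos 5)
Your argument is sound and, as far as one can tell, matches the intended derivation: the paper omits the proof entirely, deferring to equation (13a) of \cite{Lingxiang162}, so the high-SNR pre-log computation you carry out is the only route on offer. Three remarks. (i) The GSVD case analysis you flag as the main obstacle can be bypassed: writing $\log|{\bf I}+({\bf I}+{\bf B}{\bf B}^H)^{-1}{\bf A}{\bf A}^H|=\log|{\bf I}+{\bf A}{\bf A}^H+{\bf B}{\bf B}^H|-\log|{\bf I}+{\bf B}{\bf B}^H|$ and using $\log|{\bf I}+P{\bf M}{\bf M}^H|={\rm{rank}}\{{\bf M}\}\log P+O(1)$ gives the pre-log ${\rm{rank}}\{[\bar{\bf A}\ \bar{\bf B}]\}-{\rm{rank}}\{\bar{\bf B}\}={\rm{dim}}\{{\rm{span}}(\bar{\bf A})\setminus{\rm{span}}(\bar{\bf B})\}$ directly; this also avoids the slight imprecision in claiming the GSVD ``simultaneously diagonalizes'' the Gram matrices, since the matrix ${\bf X}$ in Appendix A is full rank but not unitary, so the log-determinant does not literally split into a sum over independent singular-value pairs (the rank count nonetheless survives). (ii) Your derivation honestly produces $d_s^a=(m_1-n_1)^+$ and $d_s^b=(m_2-n_2)^+$; the lemma as printed drops the positive part, which is innocuous only where the differences are nonnegative (e.g., over $\hat{\mathcal I}$, where $n_1=n_2=0$), and you are right to flag rather than hide this. (iii) The identification of ${\rm{span}}(\sqrt{\rho}{\bf H}_{bb}{\bf V}_b)$ with ${\rm{span}}({\bf H}_{bb}{\bf V}_b)$ requires $\rho>0$; for $\rho=0$ the self-interference terms simply disappear and the formula still holds with those spans replaced by $\{{\bf 0}\}$. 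None of these is a gap in the mathematics --- the proposal is correct.
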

\begin{proof}
The proof is omitted since it's similar to that of the equation (13a) in \cite{Lingxiang162}.
\end{proof}

With \emph{Lemma 1}, one can see that
the achievable S.D.o.F. each legitimate channel can offer, is equal to the dimension difference of the interference free
subspaces which the intended destination and \emph{Eve} can respectively see.
%Moreover, the achievable D.o.F. \emph{Eve} can wiretap with, is equal to
%the dimension of the interference free message signal subspace.
Motivated by this observation,
we propose a cooperative secrecy transmission scheme in which the message signals from \emph{Alice}
and \emph{Bob} are aligned along the same received subspace of \emph{Eve}, i.e.,
the set of precoding matrix pairs that meet the requirements of
the proposed scheme can be expressed as follows:
\begin{align}
\bar{\mathcal I} =\{({\bf V}_a, {\bf V}_b)|{\rm{span}}({\bf G}_a{{\bf V}_a}) = {\rm{span}}({\bf G}_b{\bf V}_b),
({\bf V}_a, {\bf V}_b) \in {\mathcal I}\}. \nonumber
\end{align}
In this way, \emph{Eve}
can only see a distorted version of the message signal, and thus both $R_e^a$ and $R_e^b$
converge to a constant as $P$ approaches infinity.

In order to solve the sum S.D.o.F. maximization problem, as in \cite{Lingxiang162},
we propose to align the signals from \emph{Alice} and \emph{Bob} along the same received subspace of \emph{Eve}.
However, due to the fact that \emph{Eve} has interest in both source signals, it
does not require the total signal streams the legitimate receiver can see to be no greater than the total number of receive antennas,
and thus we get a new transmission scheme.
Based on their potential to achieve a greater sum S.D.o.F.,
in the next section we will reclassify the candidate precoding vector pairs into
eight subsets, determine the number of linearly independent candidate precoding vector
pairs in each subset, and give their rankings in the construction of the precoding matrix pair.
It turns out that the proposed scheme is sufficient to achieve the maximum sum S.D.o.F.. Details are
given by the following proposition.

%The above described scheme
%has more relaxed constraints on the precoding matrix pairs than the scheme proposed by \cite{Lingxiang162}, as
%it does not require the total number of signal streams received at the legitimate receiver
%to be no greater than the number of receive antennas.

%This is because, as compared with \cite{Lingxiang162}, which considers
%a source destination pair exchanging confidential messages,
%another pair exchanging public messages, and a passive eavesdropper who is interested in the
%communications of the former pair, this paper considers a network with two source destination
%pairs exchanging confidential messages, and a passive eavesdropper who is interested in the
%communications of both pairs. Intuitively, the public message transmission could not benefit from additional
%signal streams from the other source, while the confidential message transmission does, since if properly designed CCI can
%impair the eavesdropping channel without hurting the main channel.

%This is because, the public message transmission could not benefit from additional
%signal streams from the other source, while the confidential message transmission does, since if properly designed CCI can
%impair the eavesdropping channel without hurting the legitimate channel.

\begin{proposition}
Let
\begin{align}
\bar{d}_s^{\rm sum} \triangleq &\mathop {\max }\nolimits_{ ({\bf V}_a, {\bf V}_b) \in \bar {\mathcal I} }  (d_s^a+d_s^b). \label{eq7}
\end{align}
Then $d_s^{\rm sum} =\bar{d}_s^{\rm sum} $.
\end{proposition}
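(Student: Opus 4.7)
The plan is to establish the two inequalities $\bar{d}_s^{\rm sum} \le d_s^{\rm sum}$ and $d_s^{\rm sum} \le \bar{d}_s^{\rm sum}$ separately. The first is immediate: since $\bar{\mathcal I} \subseteq {\mathcal I}$, the maximization over the larger set is no smaller. All the work lies in the reverse direction, i.e., showing that any pair in ${\mathcal I}$ can be converted into a pair in $\bar{\mathcal I}$ whose sum S.D.o.F.\ is at least as large.

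The construction I would use is the following. Given $({\bf V}_a,{\bf V}_b)\in{\mathcal I}$, let $W_{ab}\triangleq{\rm{span}}({\bf G}_a{\bf V}_a)\cap{\rm{span}}({\bf G}_b{\bf V}_b)$. Define $\tilde{\bf V}_a$ and $\tilde{\bf V}_b$ to be any matrices whose column spans equal, respectively, the preimage subspaces
\begin{align}
\tilde U_a &=\{{\bf v}\in{\rm{span}}({\bf V}_a)\ :\ {\bf G}_a{\bf v}\in W_{ab}\}, \nonumber\\
\tilde U_b &=\{{\bf v}\in{\rm{span}}({\bf V}_b)\ :\ {\bf G}_b{\bf v}\in W_{ab}\}, \nonumber
\end{align}
with an overall scaling to satisfy the trace constraint. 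By direct verification, ${\rm{span}}({\bf G}_a\tilde{\bf V}_a)=W_{ab}={\rm{span}}({\bf G}_b\tilde{\bf V}_b)$, so $(\tilde{\bf V}_a,\tilde{\bf V}_b)\in\bar{\mathcal I}$. Moreover, applying \emph{Lemma 1} to this modified pair gives $\tilde n_1=\tilde n_2=0$, since the images at \emph{Eve} coincide.

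It then remains to show $\tilde m_1+\tilde m_2\ge m_1+m_2-n_1-n_2$, which I would establish by the two bounds $\tilde m_1\ge m_1-n_1$ and $\tilde m_2\ge m_2-n_2$. The intuition is that restricting ${\bf V}_a$ to $\tilde U_a$ removes exactly $n_1$ dimensions from the precoding subspace of \emph{Alice}, which can shrink the useful subspace ${\bf H}_{ba}\tilde{\bf V}_a$ at \emph{Bob} by at most $n_1$; meanwhile, the simultaneous restriction of ${\bf V}_b$ to $\tilde U_b$ only shrinks the self-interference subspace ${\bf H}_{bb}\tilde{\bf V}_b$, which can only enlarge the interference-free part seen by \emph{Bob}. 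Symmetrically for $\tilde m_2$. These bounds are made precise using the identity $\dim(A\setminus B)=\dim(A+B)-\dim(B)$ and a dimension count under the assumption that all channel matrices are full rank and in general position, so that the images ${\bf G}_a\tilde U_a$, ${\bf H}_{ba}\tilde U_a$, ${\bf H}_{aa}\tilde U_a$ have the expected dimension $\dim\tilde U_a$ and the analogous statements hold for $\tilde U_b$.

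The main obstacle is this last dimension-counting step. One has to verify that the $n_1$-dimensional trimming of ${\bf V}_a$ costs at most $n_1$ useful dimensions at \emph{Bob}, rather than being aggravated by spurious cancellations or overlaps with ${\bf H}_{bb}\tilde{\bf V}_b$. This is where the full-rank/genericity assumptions on the channels are essential, and it is the only place the argument goes beyond a straightforward application of \emph{Lemma 1}. Once these inequalities are in hand, summing them with $\tilde n_1=\tilde n_2=0$ yields $d_s^a(\tilde{\bf V}_a,\tilde{\bf V}_b)+d_s^b(\tilde{\bf V}_a,\tilde{\bf V}_b)\ge d_s^a({\bf V}_a,{\bf V}_b)+d_s^b({\bf V}_a,{\bf V}_b)$, and taking the maximum over $({\bf V}_a,{\bf V}_b)\in{\mathcal I}$ delivers $d_s^{\rm sum}\le\bar d_s^{\rm sum}$, completing the proof.
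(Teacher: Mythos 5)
Your proposal is correct and follows essentially the same route as the paper's Appendix B: your preimage subspaces $\tilde U_a$, $\tilde U_b$ are exactly what the paper realizes explicitly via the GSVD of $({\bf G}_b{\bf V}_b)^H$ and $({\bf G}_a{\bf V}_a)^H$ (the columns $[\hat{\bf\Psi}_{22},\hat{\bf\Psi}_{23}]$ and $[\hat{\bf\Psi}_{11},\hat{\bf\Psi}_{12}]$), and your bounds $\tilde m_1\ge m_1-n_1$, $\tilde m_2\ge m_2-n_2$ are the paper's inequality chains (\ref{eqa4b})--(\ref{eqa4c}) and (\ref{eqa5a})--(\ref{eqa5d}). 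The only remark worth making is that the ``main obstacle'' you flag needs no genericity: since $\tilde U_a$ has codimension exactly $n_1$ in ${\rm span}({\bf V}_a)$, the subadditivity $\dim(A_1+A_2+B)-\dim(B)\le\dim(A_1+B)-\dim(B)+\dim(A_2)$ together with the fact that shrinking the interference subspace only enlarges the set difference gives the bound by pure linear algebra, exactly as the paper argues.
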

\begin{proof}
See Appendix B.
\end{proof}

By \emph{Proposition 1}, we preclude a large number of precoding matrices, which have no
contribution to the maximum achievable value of the sum S.D.o.F,
and thus reduce the number of precoding matrices we need to investigate.
In the sequel, we give \emph{Corollary 1}, by which we further reduce the candidate precoding matrices.

\begin{corollary}
Let
\begin{align}
\hat{d}_s^{\rm sum}  \triangleq &\mathop {\max }\nolimits_{({\bf V}_a, {\bf V}_b) \in \hat {\mathcal I}}  (d_s^a+d_s^b), \label{eq8}
\end{align}
with $\hat{\mathcal I} =\{({\bf V}_a, {\bf V}_b)|{\bf G}_a{{\bf V}_a} ={\bf G}_b{\bf V}_b,
({\bf V}_a, {\bf V}_b) \in \bar {\mathcal I}\}$.
Then, $\bar{d}_s^{\rm sum}  =\hat{d}_s^{\rm sum} $.
\end{corollary}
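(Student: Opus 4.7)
The inclusion $\hat{\mathcal I} \subseteq \bar{\mathcal I}$ is immediate from the definitions, which gives $\hat{d}_s^{\rm sum} \leq \bar{d}_s^{\rm sum}$ for free. The plan is therefore to establish the reverse inequality by showing that every $({\bf V}_a,{\bf V}_b) \in \bar{\mathcal I}$ can be replaced by a pair $(\tilde{\bf V}_a,\tilde{\bf V}_b) \in \hat{\mathcal I}$ achieving at least the same sum S.D.o.F., after which the result follows by taking maxima.

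The first ingredient I would use is \emph{Lemma 1} together with the defining property ${\rm span}({\bf G}_a {\bf V}_a)={\rm span}({\bf G}_b {\bf V}_b)$ of $\bar{\mathcal I}$, which forces $n_1=n_2=0$. Hence the sum S.D.o.F.\ reduces to $m_1+m_2$, a quantity that depends only on the four subspaces ${\rm span}({\bf H}_{ba}{\bf V}_a)$, ${\rm span}({\bf H}_{aa}{\bf V}_a)$, ${\rm span}({\bf H}_{ab}{\bf V}_b)$ and ${\rm span}({\bf H}_{bb}{\bf V}_b)$; in particular, right-multiplying ${\bf V}_a$ or ${\bf V}_b$ by an invertible square matrix, or appending zero columns, leaves the sum S.D.o.F.\ unchanged.

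The core construction step is the following. Let $r \triangleq {\rm dim}\{{\rm span}({\bf G}_a {\bf V}_a)\}={\rm dim}\{{\rm span}({\bf G}_b {\bf V}_b)\}$. By right-multiplying with a suitable invertible matrix, I would decompose ${\bf V}_a = [{\bf V}_a^{(1)},\,{\bf V}_a^{(2)}]$ so that the columns of ${\bf V}_a^{(2)}$ form a basis of ${\rm span}({\bf V}_a)\cap {\rm null}({\bf G}_a)$ and ${\bf G}_a {\bf V}_a^{(1)}$ has full column rank $r$; decompose ${\bf V}_b = [{\bf V}_b^{(1)},\,{\bf V}_b^{(2)}]$ analogously. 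Because ${\bf G}_a {\bf V}_a^{(1)}$ and ${\bf G}_b {\bf V}_b^{(1)}$ share the same $r$-dimensional column span and both have full column rank, there exists an invertible $r \times r$ matrix ${\bf T}_b$ satisfying ${\bf G}_b {\bf V}_b^{(1)} {\bf T}_b = {\bf G}_a {\bf V}_a^{(1)}$. I would then set
\[
\tilde{\bf V}_a \triangleq [{\bf V}_a^{(1)},\, {\bf V}_a^{(2)},\, {\bf 0}], \qquad \tilde{\bf V}_b \triangleq [{\bf V}_b^{(1)} {\bf T}_b,\, {\bf 0},\, {\bf V}_b^{(2)}],
\]
with the zero blocks sized so that both precoders have the same column count. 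A direct calculation then gives ${\bf G}_a \tilde{\bf V}_a = [{\bf G}_a {\bf V}_a^{(1)},\,{\bf 0},\,{\bf 0}] = [{\bf G}_b {\bf V}_b^{(1)} {\bf T}_b,\,{\bf 0},\,{\bf 0}] = {\bf G}_b \tilde{\bf V}_b$, so, up to a power-budget rescaling, $(\tilde{\bf V}_a,\tilde{\bf V}_b) \in \hat{\mathcal I}$.

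The final step would be to verify that all relevant spans are preserved: zero-padding contributes nothing to any column span, and invertible right-multiplication by ${\bf T}_b$ leaves spans invariant, so ${\rm span}({\bf H}_{xy}\tilde{\bf V}_y)={\rm span}({\bf H}_{xy}{\bf V}_y)$ for every relevant pair $(x,y)$ and likewise ${\rm span}({\bf G}_a \tilde{\bf V}_a)={\rm span}({\bf G}_a {\bf V}_a)$, ${\rm span}({\bf G}_b \tilde{\bf V}_b)={\rm span}({\bf G}_b {\bf V}_b)$. Consequently $m_1, m_2$ are unchanged and $n_1=n_2=0$ still holds, so $d_s^a+d_s^b$ is preserved. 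The main obstacle, and the reason the matrix equality does not follow directly from the span equality, is the presence of the ``Eve-invisible'' streams ${\bf V}_a^{(2)}, {\bf V}_b^{(2)}$ lying in ${\rm null}({\bf G}_a), {\rm null}({\bf G}_b)$: they contribute to the legitimate ${\bf H}$-spans but not to the ${\bf G}$-spans, so they must be retained via zero-padding in order to preserve $m_1, m_2$ while still forcing the column-wise equality; the remaining power normalization can always be achieved by scaling within ${\bf T}_b$ or within the ${\bf V}^{(2)}$ blocks, since positive scalings never change any of the spans involved.
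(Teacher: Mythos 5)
Your proof is correct, and the overall strategy is the one the paper uses: keep the easy inclusion $\hat{\mathcal I}\subseteq\bar{\mathcal I}$, and for the converse replace any $({\bf V}_a,{\bf V}_b)\in\bar{\mathcal I}$ by an equivalent pair on which the span equality is upgraded to the column-wise matrix equality ${\bf G}_a{\bf V}_a={\bf G}_b{\bf V}_b$, while preserving every column span (hence, by \emph{Lemma 1}, both $d_s^a$ and $d_s^b$). Where you differ is in the construction. The paper's Appendix C splits into two cases according to whether $K_b\ge N_e$ or $K_b<N_e$: in the first case it uses the SVD of ${\bf G}_a{\bf V}_a$ and ${\bf G}_b{\bf V}_b$ to separate the ${\bf T}_{1}$ (full-rank) and ${\bf T}_{0}$ (Eve-null) directions and then matches the full-rank parts with an invertible ${\bf A}$ plus zero-padding; in the second case it invokes the equality of the projection matrices ${\bf P}_v={\bf P}_w$ and right-multiplies by the pseudo-inverse-type matrices ${\bf B},{\bf C}$. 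Your single decomposition into ${\bf V}^{(1)}$ (full column rank under ${\bf G}$) and ${\bf V}^{(2)}$ (Eve-null) subsumes both cases — when ${\bf G}_b{\bf V}_b$ is full column rank your ${\bf V}_b^{(2)}$ block is simply empty — so you avoid the case split and the projection-matrix argument entirely, at the price of nothing. One small imprecision: you describe ${\bf V}_a^{(2)}$ as a basis of ${\rm span}({\bf V}_a)\cap{\rm null}({\bf G}_a)$, which does not quite account for a rank-deficient ${\bf V}_a$ (redundant columns live in ${\rm null}({\bf G}_a{\bf V}_a)$ but not in that intersection); the clean statement is to choose an invertible ${\bf M}_a$ whose trailing columns span ${\rm null}({\bf G}_a{\bf V}_a)\subseteq\mathbb{C}^{K_a}$, which is exactly what the paper's ${\bf T}_{a0}$ does. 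This is a cosmetic fix, not a gap.
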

\begin{proof}
See Appendix C.
\end{proof}

\section{Feasible Set of Precoding Vector Pairs of the Proposed Scheme}
The combination of \emph{Proposition 1} and \emph{Corollary 1} indicates
that for the purpose of obtaining the maximum sum S.D.o.F., we only need to investigate the maximum
achievable sum S.D.o.F. over the set of precoding matrix pairs $\hat{\mathcal I}$.
Let $({\bf v}_a, {\bf v}_b)$ be the precoding vectors comprising $({\bf V}_a, {\bf V}_b) \in \hat{\mathcal I}$ (see
the definition of $\hat{\mathcal I}$ under equation (\ref{eq8})).
%includes the most linearly independent and interference free precoding vector pairs $({\bf v}_a, {\bf v}_b)\in \hat{\mathcal I}$.
In this section, we construct $({\bf V}_a, {\bf V}_b)$ one vector pair $({\bf v}_a, {\bf v}_b)$ at a time.
%investigate the
%precoding vector pairs $({\bf v}_a, {\bf v}_b)$. The goal is to provide some guidelines for the construction of $({\bf V}_a, {\bf V}_b)$. % which maximizes the achievable sum S.D.o.F..  %$({\bf v}_a, {\bf v}_b)$ in constructing
%which satisfies ${\bf G}_a{{\bf V}_a} ={\bf G}_b{\bf V}_b$ and also maximizes the achievable sum S.D.o.F..

Some observations are in order. First, obviously we are interested in linearly independent precoding vectors.
Second, one can see that when the message signal sent by one source falls into the
null space of the eavesdropping channel, %even if ${\bf v}_b \ne { 0}$,
the interference from the other source cannot degrade
any further the eavesdropping channel because \emph{Eve} already receives nothing;
in those cases we may take the precoding vector at the other source to be zero. %${\bf v}_b={\bf 0}$/${\bf v}_a={\bf 0}$, and similar arguments also apply to ${\bf v}_a$.
Third, for any precoding matrix pairs $({\bf V}_a, {\bf V}_b) \in \hat{\mathcal I}$, since
${\bf G}_a{{\bf V}_a} ={\bf G}_b{\bf V}_b$, it holds that
$n_1({\bf V}_a,{\bf V}_b)=n_2({\bf V}_a,{\bf V}_b)=0$, %Besides, it holds that $({\bf V}_a, {\bf V}_b) \in {\mathcal I}$.
which combined with \emph{Lemma 1}, indicates
\begin{subequations}
\begin{align}
&d_s^a={\rm {dim}}\{{\rm {span}}({\bf H}_{ba}{\bf V}_a)\setminus{\rm {span}}({\bf H}_{bb}{\bf V}_b)\}, \label{eqi1a} \\
&d_s^b={\rm {dim}}\{{\rm {span}}({\bf H}_{ab}{\bf V}_b)\setminus{\rm {span}}({\bf H}_{aa}{\bf V}_a)\}. \label{eqi1b}
\end{align}
\end{subequations}
Thus, the sum of $d_s^a$ and $d_s^b$ increases as we include more linearly independent interference free
precoding vector pairs in $({\bf V}_a, {\bf V}_b)$.
Fourth, since all the channel matrices are assumed to be full rank, and via (\ref{eqi1a}), (\ref{eqi1b}) one can
see that
\begin{subequations}
\begin{align}
&d_s^a=\min\{(N_b^r-{\rm {rank}}\{{\bf H}_{bb}{\bf V}_b\})^+, {\rm {rank}}\{{\bf H}_{ba}{\bf V}_a\}\}, \label{eqc1a} \\
&d_s^b=\min\{(N_a^r-{\rm {rank}}\{{\bf H}_{aa}{\bf V}_a\})^+, {\rm {rank}}\{{\bf H}_{ab}{\bf V}_b\}\}, \label{eqc1b}
\end{align}
\end{subequations}
where the first term in the min operator denotes the dimension of the interference free subspace
\emph{Bob} and \emph{Alice} can see, respectively; the second term in the min operator %of (\ref{eqc1a}) and (\ref{eqc1b})
represents the number of message signal streams \emph{Bob} and \emph{Alice} can see, respectively.
Motivated by these observations, we next divide the set of precoding vector pairs
into eight subsets, namely, $Sub_{\rm 11}$,..., $Sub_{\rm 14}$, $Sub_{\rm 21}$,..., $Sub_{\rm 24}$.

$Sub_{  {1i}}$: Either ${\bf v}_a$ or ${\bf v}_b$ falls
into the null space of the eavesdropping channel. For the pairs in $Sub_{{11}} \cup Sub_{{12}}$
it holds that ${\bf v}_b = {\bf 0}$; for the pairs in $Sub_{{13}} \cup  Sub_{{14}}$ it holds that
${\bf v}_a = {\bf 0}$. \emph{Alice} is self-interference free
and suffers from self-interference for the pairs in $Sub_{  {11}} \cup  Sub_{{12}}$, respectively.
\emph{Bob} is self-interference free and suffers from self-interference for the pairs in
$Sub_{  {13}}\cup Sub_{  {14}}$, respectively.

$Sub_{  {2i}}$: Both ${\bf v}_a$ and ${\bf v}_b$ %are not equal to $\bf 0$, and
do not lie within the null space of the eavesdropping channel. For the pairs in $Sub_{{21}}$,
both \emph{Alice} and \emph{Bob} are self-interference free; for the pairs in $Sub_{{22}}$, \emph{Bob} is self-interference free, but \emph{Alice} suffers from self-interference; for the pairs in $Sub_{{23}}$, \emph{Alice} is self-interference free, but \emph{Bob} suffers from self-interference; for the pairs in $Sub_{{24}}$, both \emph{Alice} and \emph{Bob} suffer from self-interference.

%$Sub_{3}$: The message signal sent by \emph{Bob} spreads within the null space of the eavesdropping channel, and
%is \emph{Bob} self-interference free.
%$Sub_{4}$: The message signal sent by \emph{Bob} spreads within the null space of the eavesdropping channel,
%but \emph{Bob} suffers from self-interference.
%$Sub_{5}$: The message signals sent by \emph{Alice} and \emph{Bob} do not spread within the null space of the eavesdropping channel,
%and both of \emph{Alice} and \emph{Bob} are self-interference free.
%$Sub_{6}$: The message signals sent by \emph{Alice} and \emph{Bob} do not spread within the null space of the eavesdropping channel;
%\emph{Bob} is self-interference free, but \emph{Alice} suffers from self-interference.
%$Sub_{7}$: The message signals sent by \emph{Alice} and \emph{Bob} do not spread within the null space of the eavesdropping channel;
%\emph{Alice} is self-interference free, but \emph{Bob} suffers from self-interference.
%$Sub_{8}$: The message signals sent by \emph{Alice} and \emph{Bob} do not spread within the null space of the eavesdropping channel;
%both \emph{Alice} and \emph{Bob} suffer from self-interference.

In the sequel, we will first derive the formula for ${\bf v}_a$ and ${\bf v}_b$ in each subset.
As it will become clear, the formula for $({\bf v}_a, {\bf v}_b)$ in different subsets may have some common basis vectors;
in those cases, and since we are interested in linearly independent ${\bf v}_a$'s and ${\bf v}_b$'s,
the common basis vectors will only be attributed to the subset with the highest priority,
i.e., its precoding vector pairs have the potential to achieve a greater sum of $d_s^a$ and $ d_s^b$.
%Among the above eight subsets,
%some subsets have higher priority than others. In the construction of $({\bf V}_a, {\bf V}_b)$,
%we will select as many precoding vector pairs $({\bf v}_a, {\bf v}_b)$ from the subset with higher priority as possible.
Based on these observations, we then determine the number of linearly independent candidate precoding vector pairs
in $Sub_{11}$,..., $Sub_{14}$, $Sub_{21}$,..., $Sub_{24}$, i.e., $d_{ 11}$,...,
$d_{ 14}$, $d_{ 21}$,..., $d_{ 24}$, respectively.

\subsection{The formulas for ${\bf v}_a$ and ${\bf v}_b$ in each subset}

1) $Sub_{11}$: The precoding vectors in $Sub_{11}$ should satisfy
\begin{subequations}
\begin{align}
&{\bf G}_a{\bf v}_a={\bf 0},  \label{eq9a}\\
&{\bf{H}}_{aa}{\bf v}_a={\bf 0}. \label{eq9b}
\end{align}
\end{subequations}

By definition, it holds that ${\bf{G}}_{a}{\bf v}_a={\bf{G}}_{b}{\bf v}_b={ \bf 0}$.
In this subset we will only consider ${\bf v}_b = 0$,
since even if ${\bf v}_b \ne 0$ the interference from \emph{Bob} cannot degrade any further the eavesdropping channel.

Substituting ${\bf v}_a={\bf \Gamma}({\bf G}_{a}){\bf x}$ into (\ref{eq9b}), with $\bf x$ being an arbitrary vector
with appropriate length, we arrive at ${\bf{H}}_{aa}{\bf \Gamma}({\bf G}_{a}){\bf x}={\bf 0}$.
This is equivalent to ${\bf x}={\bf \Gamma}({\bf{H}}_{aa}{\bf \Gamma}({\bf G}_{a})){\bf y}$, with
$\bf y$ being an arbitrary vector with appropriate length.

Thus, the formula of ${\bf v}_a$ in $Sub_{11}$ is
\begin{align}
{\bf v}_a={\bf \Gamma}({\bf G}_{a}){\bf \Gamma}({\bf{H}}_{aa}{\bf \Gamma}({\bf G}_{a})){\bf z},  \label{eq10}
\end{align}
with $\bf z$ being any nonzero vectors with appropriate length.
%In addition, since all the channel matrices are assumed to be full rank, it holds that
%\begin{align}
%d_1 \le {\rm {dim}}\{{\rm {null}}({{\bf H}_{aa} {\bf \Gamma}}({\bf{G}}_{a}))\}= (N_a^t-N_e-N_a^r)^+. \label{eq11}
%\end{align}

2) $Sub_{12}$: The precoding vectors in $Sub_{12}$ should satisfy
\begin{subequations}
\begin{align}
&{\bf G}_a{\bf v}_a={\bf 0}, \label{eq12a} \\
&{\bf{H}}_{aa}{\bf v}_a\ne {\bf 0}. \label{eq12b}
\end{align}
\end{subequations}

The vectors ${\bf v}_a$ satisfying (\ref{eq12a}) are of the form ${\bf \Gamma}({\bf G}_{a}){\bf x}$.
Because ${\bf H}_{aa}$ is independent of ${\bf{G}}_{a}$, for precoding vectors satisfying (\ref{eq12a}),
${\bf H}_{aa}{\bf v}_a \ne 0$ holds true with probability one.
So, the vectors ${\bf v}_a$ in $Sub_{12}$ are of the form ${\bf \Gamma}({\bf G}_{a}){\bf x}$.

On the other hand, since we want linearly independent precoding vectors, the beamforming direction
already considered in the set with higher priority, e.g., $Sub_{11}$, should not be under consideration.
Thus, in $Sub_{12}$ we only consider the following vectors,
\begin{align}
{\bf v}_a={\bf \Gamma}({\bf G}_{a}){\bf \Gamma}^\perp({\bf{H}}_{aa}{\bf \Gamma}({\bf G}_{a})){\bf z}.  \label{eq13}
\end{align}

%So, the linear independent vectors we can choose from $Sub_1$ and $Sub_2$
%should be no greater than ${\rm {dim}} \{{\rm {null}}({\bf{G}}_{a})\}$. That is,
%\begin{align}
%d_2 +d_1\le  (N_a^t-N_e)^+. \label{eq14}
%\end{align}

3) $Sub_{13}$: The precoding vectors in $Sub_{13}$ should satisfy
\begin{subequations}
\begin{align}
&{\bf G}_b{\bf v}_b={\bf 0},  \label{eq15a}\\
&{\bf{H}}_{bb}{\bf v}_b={\bf 0}. \label{eq15b}
\end{align}
\end{subequations}
In a similar way we derive the formula of ${\bf v}_a$ in $Sub_{11}$, we obtain
the formula of ${\bf v}_b$ in $Sub_{13}$, i.e.,
\begin{align}
{\bf v}_b={\bf \Gamma}({\bf G}_{b}){\bf \Gamma}({\bf{H}}_{bb}{\bf \Gamma}({\bf G}_{b})){\bf z}.  \label{eq16}
\end{align}
%In addition, since all the channel matrices are assumed to be full rank, it holds that
%\begin{align}
%d_3 \le {\rm {dim}}\{{\rm {null}}({{\bf H}_{bb} {\bf \Gamma}}({\bf{G}}_{b}))\}= (N_b^t-N_e-N_b^r)^+. \label{eq17}
%\end{align}

4) $Sub_{14}$: The precoding vectors in $Sub_{14}$ should satisfy
\begin{subequations}
\begin{align}
&{\bf G}_b{\bf v}_b={\bf 0}, \label{eq18a} \\
&{\bf{H}}_{bb}{\bf v}_b\ne {\bf 0}. \label{eq18b}
\end{align}
\end{subequations}
In a similar way we derive the formula of ${\bf v}_a$ in $Sub_{12}$, we obtain
the formula of ${\bf v}_b$ in $Sub_{14}$, i.e.,
\begin{align}
{\bf v}_b={\bf \Gamma}({\bf G}_{b}){\bf \Gamma}^\perp({\bf{H}}_{bb}{\bf \Gamma}({\bf G}_{b})){\bf z}.\label{eq19}
\end{align}
%So, the linear independent vectors we can choose from $Sub_3$ and $Sub_4$
%should be no greater than ${\rm {dim}} \{{\rm {null}}({\bf{G}}_{b})\}$. That is,
%\begin{align}
%d_4 +d_3\le  (N_b^t-N_e)^+. \label{eq20}
%\end{align}

5) $Sub_{21}$: The precoding vector pairs in $Sub_{21}$ should satisfy
\begin{subequations}
\begin{align}
&{\bf{H}}_{bb}{\bf v}_b={\bf 0}, \label{eq21a} \\
&{\bf{H}}_{aa}{\bf v}_a= {\bf 0}, \label{eq21b}\\
&{\bf G}_a{\bf v}_a ={\bf G}_b{\bf v}_b \ne {\bf 0}. \label{eq21c}
\end{align}
\end{subequations}

Substituting ${\bf v}_a={\bf \Gamma}({\bf{H}}_{aa}){\bf x}$ and ${\bf v}_b={\bf \Gamma}({\bf{H}}_{bb}){\bf y}$
into (\ref{eq21c}) yields
\begin{align}
{\bf G}_a{\bf \Gamma}({\bf{H}}_{aa}){\bf x} = {\bf G}_b{\bf \Gamma}({\bf{H}}_{bb}){\bf y} \ne {\bf 0}. \label{eq22}
\end{align}

%Let ${\bf A}={\bf G}_a{\bf \Gamma}({\bf{H}}_{aa})$, ${\bf B}={\bf G}_b{\bf \Gamma}({\bf{H}}_{bb})$.
Via \emph{Proposition 1}(i) of \cite{Lingxiang162} we arrive at $\bf x$ and $\bf y$ satisfying (\ref{eq22}), i.e.,
\begin{align}
&{\bf x}=\hat{\bf \Psi}_{12}\hat{\bf \Lambda}_1^{-1}{\bf z}+{\bf\Gamma}({\bf{G}}_{a}{\bf \Gamma}({\bf H}_{aa})){\bf z}_a, \nonumber \\
&{\bf y}=\hat{\bf \Psi}_{22}\hat{\bf \Lambda}_2^{-1}{\bf z}+{\bf\Gamma}({\bf{G}}_{b}{\bf \Gamma}({\bf H}_{bb})){\bf z}_b, \nonumber
\end{align}
where ${\bf z}_a$ and ${\bf z}_b$ denote any vectors with appropriate length;
$\hat{\bf \Psi}_{12}$, $\hat{\bf \Lambda}_1$, $\hat{\bf \Psi}_{22}$, $\hat{\bf \Lambda}_2$, and $\hat s$
(to be used in the next subsection) correspond to the $ {\bf \Psi}_{12}$, $ {\bf \Lambda}_1$, $ {\bf \Psi}_{22}$, $ {\bf \Lambda}_2$ and $  s$,
%of \emph{Proposition 1} in \cite{Lingxiang162},
and arise due to the GSVD of
$({\bf G}_a{\bf \Gamma}({\bf{H}}_{aa}))^H$ and $ ({\bf G}_b{\bf \Gamma}({\bf{H}}_{bb}))^H$.

Thus, the formulas for ${\bf v}_a$ and ${\bf v}_b$ in $Sub_{21}$ are of the form
\begin{subequations}
\begin{align}
&{\bf v}_a={\bf \Gamma}({\bf{H}}_{aa})\hat{\bf \Psi}_{12}\hat{\bf \Lambda}_1^{-1}{\bf z}+{\bf \Gamma}({\bf{H}}_{aa}){\bf\Gamma}({\bf{G}}_{a}{\bf \Gamma}({\bf H}_{aa})){\bf z}_a, \label{eq23a} \\
&{\bf v}_b={\bf \Gamma}({\bf{H}}_{bb})\hat{\bf \Psi}_{22}\hat{\bf \Lambda}_2^{-1}{\bf z}+{\bf \Gamma}({\bf{H}}_{bb}){\bf\Gamma}({\bf{G}}_{b}{\bf \Gamma}({\bf H}_{bb})){\bf z}_b. \label{eq23b}
\end{align}
\end{subequations}

%Applying \emph{Proposition 1} in \cite{}, we
%can respectively obtain the number of linearly independent vectors ${\bf v}_a$ and ${\bf v}_b$ satisfying (\ref{eq22}), i.e.,
%\begin{align}
%& \hat d_5^a \triangleq \hat s+{\rm {dim}} \{{\rm {null}}({\bf{G}}_{a}{\bf \Gamma}({\bf H}_{aa})) \}, \nonumber  \\
%& \hat d_5^b \triangleq \hat s+{\rm {dim}} \{{\rm {null}}({\bf{G}}_{b}{\bf \Gamma}({\bf H}_{bb})) \}. \nonumber
%\end{align}
%Since ${\rm {null}}({\bf{G}}_{a}{\bf \Gamma}({\bf H}_{aa}))={\rm {null}}({{\bf H}_{aa} {\bf \Gamma}}({\bf{G}}_{a})) $,
%the basis of ${\rm {null}}({\bf{G}}_{a}{\bf \Gamma}({\bf H}_{aa}))$ also spans the solution space of ${\bf v}_a$ in $Sub_1$. Thus,
%\begin{align}
%d_5 +d_1\le \hat d_5^a.  \label{eq23}
%\end{align}
%Similarly, the basis of ${\rm {null}}({\bf{G}}_{b}{\bf \Gamma}({\bf H}_{bb}))$ also spans the solution space of ${\bf v}_b$ in $Sub_3$. Thus,
%\begin{align}
%d_5 +d_3\le \hat d_5^b.  \label{eq24}
%\end{align}

6) $Sub_{22}$: The precoding vector pairs in $Sub_{22}$ should satisfy
\begin{subequations}
\begin{align}
&{\bf{H}}_{bb}{\bf v}_b={\bf 0},  \label{eq25a}\\
&{\bf{H}}_{aa}{\bf v}_a \ne {\bf 0}, \label{eq25b} \\
&{\bf G}_a{\bf v}_a ={\bf G}_b{\bf v}_b \ne {\bf 0}.\label{eq25c}
\end{align}
\end{subequations}

Substituting ${\bf v}_b={\bf \Gamma}({\bf{H}}_{bb}){\bf y}$
into (\ref{eq25c}), we arrive at
\begin{align}
{\bf G}_a{\bf v}_a = {\bf G}_b{\bf \Gamma}({\bf{H}}_{bb}){\bf y} \ne {\bf 0}. \label{eq26}
\end{align}

%Let ${\bf A}={\bf G}_a$, ${\bf B}={\bf G}_b{\bf \Gamma}({\bf{H}}_{bb})$.
Via \emph{Proposition 1}(i) of \cite{Lingxiang162} we arrive at ${\bf v}_a$ and $\bf y$ satisfying (\ref{eq26}), i.e.,
\begin{align}
&{\bf v}_a=\bar{\bf \Psi}_{12}\bar{\bf \Lambda}_1^{-1}{\bf z}+{\bf\Gamma}({\bf{G}}_{a}){\bf z}_a, \nonumber \\
&{\bf y}=\bar{\bf \Psi}_{22}\bar{\bf \Lambda}_2^{-1}{\bf z}+{\bf\Gamma}({\bf{G}}_{b}{\bf \Gamma}({\bf H}_{bb})){\bf z}_b, \nonumber
\end{align}
where $\bar{\bf \Psi}_{12}$, $\bar{\bf \Lambda}_1$, $\bar{\bf \Psi}_{22}$, $\bar{\bf \Lambda}_2$, and $\bar s$
(to be used in the next subsection) correspond to the $ {\bf \Psi}_{12}$, $ {\bf \Lambda}_1$, $ {\bf \Psi}_{22}$, $ {\bf \Lambda}_2$ and $  s$,
%of \emph{Proposition 1} in \cite{Lingxiang162},
and arise due to the GSVD of
${\bf G}_a^H$ and $ ({\bf G}_b{\bf \Gamma}({\bf{H}}_{bb}))^H$.

Thus, the formulas for ${\bf v}_a$ and ${\bf v}_b$ in $Sub_{22}$ are of the form
\begin{subequations}
\begin{align}
&{\bf v}_a=\bar{\bf \Psi}_{12}\bar{\bf \Lambda}_1^{-1}{\bf z}+{\bf\Gamma}({\bf{G}}_{a}){\bf z}_a, \label{eq27a} \\
&{\bf v}_b={\bf \Gamma}({\bf{H}}_{bb})\bar{\bf \Psi}_{22}\bar{\bf \Lambda}_2^{-1}{\bf z}+{\bf \Gamma}({\bf{H}}_{bb}){\bf\Gamma}({\bf{G}}_{b}{\bf \Gamma}({\bf H}_{bb})){\bf z}_b. \label{eq27b}
\end{align}
\end{subequations}
In the above, we used the fact that since ${\bf{H}}_{aa}$ is independent of
${\bf{H}}_{bb}$, ${\bf G}_a$ and ${\bf G}_b$, for the precoding vector pairs in (\ref{eq27a}) and (\ref{eq27b}),
${\bf{H}}_{aa}{\bf v}_a \ne {\bf 0}$ holds true with probability one.

%Applying \emph{Proposition 1} in \cite{}, we
%can respectively obtain the number of linearly independent vector pairs ${\bf v}_a$ and ${\bf v}_b$ satisfying (\ref{eq26}), i.e.,
%\begin{align}
%& \hat d_6^a \triangleq \bar s+{\rm {dim}} \{{\rm {null}}({\bf{G}}_{a}) \}, \nonumber \\
%& \hat d_6^b \triangleq \bar s+{\rm {dim}} \{{\rm {null}}({\bf{G}}_{b}{\bf \Gamma}({\bf H}_{bb})) \}. \nonumber
%\end{align}
%
%On combining the (\ref{eq21a})-(\ref{eq21c}) with (\ref{eq25a})-(\ref{eq25c}), it holds that
%\begin{align}
%Sub_5 \cup Sub_6=\{({\bf v}_a, {\bf v}_b)|{\bf{H}}_{bb}{\bf v}_b={\bf 0},
%{\bf{G}}_{a}{\bf v}_a ={\bf{G}}_{b}{\bf v}_b \ne {\bf 0}\} \nonumber
%\end{align}
%In addition, the basis of ${\rm {null}}({\bf{G}}_{a})$ also spans the solution space of ${\bf v}_a$ in $Sub_1 \cup Sub_2$.
%Therefore,
%\begin{align}
%d_6+d_5+d_2 +d_1\le \hat d_6^a.  \label{eq27}
%\end{align}
%Similarly, we should have
%\begin{align}
%d_6+d_5+d_4 \le \hat d_6^b.  \label{eq28}
%\end{align}

7) $Sub_{23}$: The precoding vector pairs in $Sub_{23}$ should satisfy
\begin{subequations}
\begin{align}
&{\bf{H}}_{bb}{\bf v}_b\ne {\bf 0},  \label{eq29a}\\
&{\bf{H}}_{aa}{\bf v}_a = {\bf 0},  \label{eq29b}\\
&{\bf G}_a{\bf v}_a ={\bf G}_b{\bf v}_b \ne {\bf 0}. \label{eq29c}
\end{align}
\end{subequations}

Substituting ${\bf v}_a={\bf \Gamma}({\bf{H}}_{aa}){\bf x}$
into (\ref{eq29c}), we arrive at
\begin{align}
{\bf G}_a{\bf \Gamma}({\bf{H}}_{aa}){\bf x} = {\bf G}_b{\bf v}_b \ne {\bf 0}. \label{eq30}
\end{align}

%Let ${\bf A}={\bf G}_a{\bf \Gamma}({\bf{H}}_{aa})$, ${\bf B}={\bf G}_b$.
Via \emph{Proposition 1}(i) of \cite{Lingxiang162} we arrive at $\bf x$ and ${\bf v}_b$ satisfying (\ref{eq30}), i.e.,
\begin{align}
&{\bf x}=\breve{\bf \Psi}_{12}\breve{\bf \Lambda}_1^{-1}{\bf z}+{\bf\Gamma}({\bf{G}}_{a}{\bf \Gamma}({\bf H}_{aa})){\bf z}_a, \nonumber \\
&{\bf v}_b=\breve{\bf \Psi}_{22}\breve{\bf \Lambda}_2^{-1}{\bf z}+{\bf\Gamma}({\bf{G}}_{b}){\bf z}_b. \nonumber
\end{align}
where $\breve{\bf \Psi}_{12}$, $\breve{\bf \Lambda}_1$, $\breve{\bf \Psi}_{22}$, $\breve{\bf \Lambda}_2$, and $\breve s$
(to be used in the next subsection) correspond to the $ {\bf \Psi}_{12}$, $ {\bf \Lambda}_1$, $ {\bf \Psi}_{22}$, $ {\bf \Lambda}_2$ and $  s$,
%of \emph{Proposition 1} in \cite{Lingxiang162},
and arise due to the GSVD of
$({\bf G}_a{\bf \Gamma}({\bf{H}}_{aa}))^H$ and $ {\bf G}_b^H$.

Thus, the formulas for ${\bf v}_a$ and ${\bf v}_b$ in $Sub_{23}$ are of the form
\begin{subequations}
\begin{align}
&{\bf v}_a={\bf \Gamma}({\bf{H}}_{aa})\breve{\bf \Psi}_{12}\breve{\bf \Lambda}_1^{-1}{\bf z}+{\bf \Gamma}({\bf{H}}_{aa}){\bf\Gamma}({\bf{G}}_{a}{\bf \Gamma}({\bf H}_{aa})){\bf z}_a, \label{eq31a} \\
&{\bf v}_b=\breve{\bf \Psi}_{22}\breve{\bf \Lambda}_2^{-1}{\bf z}+{\bf\Gamma}({\bf{G}}_{b}){\bf z}_b. \label{eq31b}
\end{align}
\end{subequations}

%Consider the decomposition
%\begin{align}
%&{\rm {GSVD}}({\bf G}_a{\bf \Gamma}({\bf{H}}_{aa}), {\bf G}_b; N_e, \hat N_a^t, N_b^t) \nonumber\\
%&\quad \quad =(\breve {\bf\Psi}_1,\breve{\bf\Psi}_2,\breve{\bf \Lambda}_1,\breve{\bf \Lambda}_2,\breve{\bf X},\breve k,\breve r,\breve s,\breve p). \nonumber
%\end{align}
%Applying \emph{Proposition 1} in \cite{}, we
%can respectively obtain the number of linearly independent vector pairs ${\bf v}_a$ and ${\bf v}_b$ satisfying (\ref{eq30}), i.e.,
%\begin{align}
%&\hat d_7^a \triangleq \breve s+{\rm {dim}} \{{\rm {null}}({\bf{G}}_{a}{\bf \Gamma}({\bf H}_{aa})) \}, \nonumber \\
%&\hat d_7^b \triangleq \breve s+{\rm {dim}} \{{\rm {null}}({\bf{G}}_{b}) \}. \nonumber
%\end{align}
%
%On combining the (\ref{eq21a})-(\ref{eq21c}) with (\ref{eq29a})-(\ref{eq29c}), it holds that
%\begin{align}
%Sub_5 \cup Sub_7=\{({\bf v}_a, {\bf v}_b)|{\bf{H}}_{aa}{\bf v}_a = {\bf 0},
%{\bf{G}}_{a}{\bf v}_a ={\bf{G}}_{b}{\bf v}_b \ne {\bf 0}\} \nonumber
%\end{align}
%In addition, the basis of ${\rm {null}}({\bf{G}}_{b})$ also spans the solution space of ${\bf v}_b$ in $Sub_3 \cup Sub_4$.
%Therefore,
%\begin{align}
%d_7+d_5+d_4 +d_3\le \hat d_7^b.  \label{eq31}
%\end{align}
%Similarly, we should have
%\begin{align}
%d_7+d_5+d_2 \le \hat d_7^a.  \label{eq32}
%\end{align}

8) $Sub_{24}$: The precoding vector pairs in $Sub_{24}$ should satisfy
\begin{subequations}
\begin{align}
&{\bf{H}}_{bb}{\bf v}_b \ne {\bf 0},  \label{eq33a}\\
&{\bf{H}}_{aa}{\bf v}_a \ne {\bf 0}, \label{eq33b}\\
&{\bf G}_a{\bf v}_a ={\bf G}_b{\bf v}_b \ne {\bf 0}. \label{eq33c}
\end{align}
\end{subequations}

%Let ${\bf A}={\bf G}_a$, ${\bf B}={\bf G}_b$.
Via \emph{Proposition 1}(i) of \cite{Lingxiang162} we arrive at that
the formulas for ${\bf v}_a$ and ${\bf v}_b$ in $Sub_{24}$ are of the form
\begin{subequations}
\begin{align}
&{\bf v}_a=\tilde{\bf \Psi}_{12}\tilde{\bf \Lambda}_1^{-1}{\bf z}+{\bf\Gamma}({\bf{G}}_{a}){\bf z}_a, \label{eq34a} \\
&{\bf v}_b=\tilde{\bf \Psi}_{22}\tilde{\bf \Lambda}_2^{-1}{\bf z}+{\bf\Gamma}({\bf{G}}_{b}){\bf z}_b, \label{eq34b}
\end{align}
\end{subequations}
where $\tilde{\bf \Psi}_{12}$, $\tilde{\bf \Lambda}_1$, $\tilde{\bf \Psi}_{22}$, $\tilde{\bf \Lambda}_2$, and $\tilde s$
(to be used in the next subsection) correspond to the $ {\bf \Psi}_{12}$, $ {\bf \Lambda}_1$, $ {\bf \Psi}_{22}$, $ {\bf \Lambda}_2$ and $  s$,
%of \emph{Proposition 1} in \cite{Lingxiang162},
and arise due to the GSVD of
${\bf G}_a^H$ and $ {\bf G}_b^H$.

%Consider the decomposition
%\begin{align}
%&{\rm {GSVD}}({\bf G}_a, {\bf G}_b; N_e, N_a^t, N_b^t) \nonumber\\
%&\quad \quad =(\tilde {\bf\Psi}_1,\tilde{\bf\Psi}_2,\tilde{\bf \Lambda}_1,\tilde{\bf \Lambda}_2,\tilde{\bf X},\tilde k,\tilde r,\tilde s,\tilde p). \nonumber
%\end{align}
%Applying \emph{Proposition 1} in \cite{}, we
%can respectively obtain the number of linearly independent vector pairs ${\bf v}_a$ and ${\bf v}_b$ satisfying (\ref{eq33c}), i.e.,
%\begin{align}
%&\hat d_8^a \triangleq \tilde s+{\rm {dim}} \{{\rm {null}}({\bf{G}}_{a}) \},\nonumber \\
%&\hat d_8^b \triangleq \tilde s+{\rm {dim}} \{{\rm {null}}({\bf{G}}_{b}) \}.\nonumber
%\end{align}
%
%On combining the (\ref{eq21a})-(\ref{eq21c}) with (\ref{eq25a})-(\ref{eq25c}), (\ref{eq29a})-(\ref{eq29c}) and
%(\ref{eq33a})-(\ref{eq33c}), it holds that
%\begin{align}
%Sub_5 \cup Sub_6 \cup Sub_7 \cup Sub_8=\{({\bf v}_a, {\bf v}_b)|
%{\bf{G}}_{a}{\bf v}_a ={\bf{G}}_{b}{\bf v}_b \ne {\bf 0}\} \nonumber
%\end{align}
%Therefore,
%\begin{align}
%&d_8+d_7+d_6+d_5+d_2 +d_1\le \hat d_8^a.  \label{eq34}\\
%&d_8+d_7+d_6+d_5+d_4 +d_3 \le \hat d_8^b.  \label{eq35}
%\end{align}

\subsection{The number of linearly independent candidate precoding vector pairs in each subset}
Since all the channel matrices are assumed to be full rank, and by (\ref{eq10})(\ref{eq13})(\ref{eq16})(\ref{eq19}), it respectively holds that
\begin{subequations}
\begin{align}
&d_{ 11} \le (N_a^t-N_e-N_a^r)^+, \label{eq35a} \\
&d_{ 12}\le \min \{N_a^r, (N_a^t-N_e)^+\}, \label{eq35b} \\
&d_{ 13}\le (N_b^t-N_e-N_b^r)^+, \label{eq35c} \\
&d_{ 14}\le \min \{N_b^r, (N_b^t-N_e)^+\}. \label{eq35d}
\end{align}
\end{subequations}

From the above subsection, one can see that the formulas for $ {\bf v}_a$ and $ {\bf v}_b $ in $Sub_{1i}$, $i=1,...,4$,
may have some common basis vectors with those in $Sub_{2j}$, $j=1,...,4$. For example,
some basis vectors of ${\bf v}_a$ in $Sub_{21}$, i.e., ${\bf \Gamma}({\bf{H}}_{aa}){\bf\Gamma}({\bf{G}}_{a}{\bf \Gamma}({\bf H}_{aa}))$,
also span the solution space of $Sub_{11}$, since they span the same space as
${\bf \Gamma}({\bf{G}}_{a}){\bf\Gamma}({\bf{H}}_{aa}{\bf \Gamma}({\bf G}_{a}))$; some basis vectors of ${\bf v}_b$ in $Sub_{21}$, i.e., ${\bf \Gamma}({\bf{H}}_{bb}){\bf\Gamma}({\bf{G}}_{b}{\bf \Gamma}({\bf H}_{bb}))$, also span the solution space of $Sub_{13}$, since they
span the same space as ${\bf \Gamma}({\bf{G}}_{b}){\bf\Gamma}({\bf{H}}_{bb}{\bf \Gamma}({\bf G}_{b}))$.
Since we are interested in linearly independent ${\bf v}_a$'s and ${\bf v}_b$'s, the number of linearly independent precoding vectors ${\bf v}_a$ and
${\bf v}_b$ considered in $Sub_{21}$ should be equal. Moreover, any paired selection of those basis vectors from $Sub_{21}$
cannot help increase the sum S.D.o.F., as compared with the case in which we respectively attribute those basis vectors to $Sub_{11}$ and $Sub_{ 13}$. Thus, for the pairs in $Sub_{21}$, we do not consider the basis vectors ${\bf \Gamma}({\bf{H}}_{aa}){\bf\Gamma}({\bf{G}}_{a}{\bf \Gamma}({\bf H}_{aa}))$ and ${\bf \Gamma}({\bf{H}}_{bb}){\bf\Gamma}({\bf{G}}_{b}{\bf \Gamma}({\bf H}_{bb}))$.
Therefore,  %and so the number of linearly independent candidate pairs in $Sub_{21}$ should be no greater than $\hat s$, i.e.,
\begin{align}
d_{ 21} \le \hat s. \label{eq36a}
\end{align}
The above arguments also apply to $Sub_{22}$, $Sub_{23}$ and $Sub_{24}$.
%In addition, the equality in (\ref{eq35a}) and (\ref{eq35c}) hold true.
Thus, the equalities in (\ref{eq35a})-(\ref{eq35d}) hold true.

\begin{table}[!t]
\renewcommand{\arraystretch}{1.6}
\centering
\caption{The maximum number of linearly independent candidate precoding vector pairs.}
\begin{tabular}{|c|c|c|}
\hline
\textbf{Subsets}&\textbf{Maximum number
of linearly independent vectors} \\
\hline
{$Sub_{11}$}  &  $d_{ 11}=(N_a^t-N_e-N_a^r)^+$ \\
\hline
{$Sub_{12}$}  &  $d_{ 12}= \min \{N_a^r, (N_a^t-N_e)^+\}$ \\
\hline
{$Sub_{13}$}  &  $d_{ 13}=(N_b^t-N_e-N_b^r)^+$ \\
\hline
{$Sub_{14}$}  &  $d_{ 14}= \min \{N_b^r, (N_b^t-N_e)^+\}$ \\
\hline
{$Sub_{21}$}  &  $d_{ 21}= \hat s $\\        %\cline{1-1} (\min \{(N_a^t-N_a^r)^+,N_e\}+\min \{(N_b^t-N_b^r)^+,N_e\}-N_e)^+
\hline
{$Sub_{22}$}  &  $d_{ 22}=\bar s -d_{ 21}$ \\        %\cline{1-1} (\min \{N_a^t,N_e\}+\min \{(N_b^t-N_b^r)^+,N_e\}-N_e)^+
\hline
{$Sub_{23}$}  &  $d_{ 23}=\breve s -d_{ 21}$  \\  %(\min \{(N_a^t-N_a^r)^+,N_e\}+\min \{N_b^t,N_e\}-N_e)^+
\hline
{$Sub_{24}$}  &  $d_{ 24}=\tilde s-(d_{ 21}+d_{ 22}+d_{ 23})$    \\ %(\min \{N_a^t,N_e\}+\min \{(N_b^t,N_e\}-N_e)^+
\hline
\end{tabular}
\end{table}

On the other hand, on combining (\ref{eq21a})-(\ref{eq21c}) with (\ref{eq25a})-(\ref{eq25c}), it holds that
\begin{align}
Sub_{21} \cup Sub_{22}=\{({\bf v}_a, {\bf v}_b)|{\bf{H}}_{bb}{\bf v}_b={\bf 0},
{\bf{G}}_{a}{\bf v}_a ={\bf{G}}_{b}{\bf v}_b \ne {\bf 0}\}. \nonumber
\end{align}
Thus,
\begin{align}
d_{ 22} +d_{ 21}\le \bar s. \label{eq36b}
\end{align}
On combining (\ref{eq21a})-(\ref{eq21c}) with (\ref{eq29a})-(\ref{eq29c}), it holds that
\begin{align}
Sub_{21} \cup Sub_{23}=\{({\bf v}_a, {\bf v}_b)|{\bf{H}}_{aa}{\bf v}_a = {\bf 0},
{\bf{G}}_{a}{\bf v}_a ={\bf{G}}_{b}{\bf v}_b \ne {\bf 0}\}. \nonumber
\end{align}
Thus,
\begin{align}
&d_{ 23} +d_{ 21}\le \breve s. \label{eq36c}
\end{align}
On combining (\ref{eq21a})-(\ref{eq21c}) with (\ref{eq25a})-(\ref{eq25c}), (\ref{eq29a})-(\ref{eq29c}) and
(\ref{eq33a})-(\ref{eq33c}), it holds that
$ Sub_{21} \cup Sub_{22} \cup Sub_{23} \cup Sub_{24}=\{({\bf v}_a, {\bf v}_b)|
{\bf{G}}_{a}{\bf v}_a ={\bf{G}}_{b}{\bf v}_b \ne {\bf 0}\} $.
Thus,
\begin{align}
d_{ 24} +d_{ 23}+d_{ 22} +d_{ 21}\le \tilde s. \label{eq36d}
\end{align}
%For $Sub_{2j}$, $j=1,...,4$, since
Regarding the priority of $Sub_{2j}$, $j=1,...,4$, $Sub_{21}$ ranks the highest
and $Sub_{24}$ ranks the lowest.
Therefore, all the inequalities in (\ref{eq36a})-(\ref{eq36d}) hold true.

Based on the above discussions, and using (\ref{eq001d}),
Table I provides the number of linearly independent vectors $({\bf v}_a, {\bf v}_b)$'s that should be
considered in each subset.

\section{Construction of $({\bf V}_a, {\bf V}_b)$ Which Achieves the Maximum Sum S.D.o.F.}
The key idea for achieving the maximum sum S.D.o.F. is to include as many interference free
precoding vector pairs (interference free signal streams) in $({\bf V}_a, {\bf V}_b)$ as possible. To achieve that goal,
in the construction of $({\bf V}_a, {\bf V}_b)$,
we will select as many precoding vector pairs $({\bf v}_a, {\bf v}_b)$
from the subset with higher priority as possible. When there are no more available pairs in
a given subset, we will consider the next subset in terms of priority.

By the definition of each subset and the equations in (\ref{eqc1a}) and (\ref{eqc1b}),
one can see that $Sub_{21}$ has the highest priority, followed by
$Sub_{11} \cup Sub_{13} \cup Sub_{22}\cup Sub_{23}$, and then
$Sub_{24} \cup Sub_{12}\cup Sub_{14}$, i.e.,
\begin{align}
Sub_{21} \succ Sub_{11} \cup Sub_{13} &\cup Sub_{22}\cup Sub_{23} \succ Sub_{\rm L},  \label{eqrank}
\end{align}
where $Sub_{\rm L} \triangleq Sub_{24} \cup Sub_{12}\cup Sub_{14}$.
Moreover, $Sub_{12}$ and $ Sub_{14}$ have the same priority.
$Sub_{24}$ has higher priority than $Sub_{12}\cup Sub_{14}$ except for the case in which
the dimensions of the available interference free receive subspace \emph{Alice} and \emph{Bob} can see are both equal to one.
The priorities of $Sub_{11} $, $ Sub_{13} $, $ Sub_{22}$ and $Sub_{23}$ depend on the number of antennas,
and also the available interference free receive subspace \emph{Alice} and \emph{Bob} can respectively see.
%For example, Since for the latter case selecting pairs from $Sub_{24}$ cannot help to increase the sum S.D.o.F.,
%in the construction of $({\bf V}_a, {\bf V}_b)$
%we consider the rankings as
%\begin{align}
%Sub_{21} \succ Sub_{24} \succ Sub_{12}=Sub_{14}. \nonumber
%\end{align}
%$Sub_{12}$ and $Sub_{14}$ has the potential to achieve a greater sum S.D.o.F.. For example,
%consider the case $N_b^r=N_a^r=3$, $d_{24}=2$, $d_{12}+d_{14}=2$ and all the other $d_{ij}$'s are zero. If we pick precoding vector
%pairs from $Sub_{24}$ first, we can obtain a sum S.D.o.F. of 3; if we
%pick precoding vector pairs from $Sub_{12} \cup Sub_{14}$ first, the maximum achievable sum S.D.o.F. is equal to 2.

In the following, we will consider four distinct cases, i.e., the case of
$N_a^t \le N_e+N_a^r$ and $N_b^t\le N_e+N_b^r$, the case of $N_a^t \le N_e+N_a^r$ and $N_b^t > N_e+N_b^r$,
the case of $N_a^t > N_e+N_a^r$ and $N_b^t \le N_e+N_b^r$, and the case of $N_a^t \le N_e+N_a^r$ and $N_b^t > N_e+N_b^r$.
For each case, we will determine the specific rankings of $Sub_{11} $, $ Sub_{13} $, $ Sub_{22}$ and $Sub_{23}$.

By those rankings, we include precoding vector pairs into the precoding matrix pair $({\bf V}_a, {\bf V}_b)$ one by one,
until adding more precoding vector pairs does not increase the achievable sum S.D.o.F..
%Indeed, one sufficient condition for stopping selecting is that both of the following two constraints are violated, i.e.,
%\begin{subequations}
%\begin{align}
%&{\rm {rank}} \{{\bf H}_{aa}{\bf V}_a\}+{\rm {rank}} \{{\bf H}_{ab}{\bf V}_b\} \le N_a^r, \label{eq37a} \\
%&{\rm {rank}} \{{\bf H}_{ba}{\bf V}_a\}+{\rm {rank}} \{{\bf H}_{bb}{\bf V}_b\} \le N_b^r. \label{eq37b}
%\end{align}
%\end{subequations}
With this constructive method and stop criteria,
we then determine the maximum achievable sum S.D.o.F..

\subsection{$N_a^t \le N_e+N_a^r$, $N_b^t\le N_e+N_b^r$.}
\begin{figure*}[!t]
\normalsize
\setcounter{equation}{36}
{\small\begin{align}
&q_1= {\min}^+\{\textcolor{blue}{  \max}\{N_b^r,N_a^r\}, d_{21}\}, \nonumber  \\
&q_2= {\min}^+\{\textcolor{blue}{  \max}\{\lfloor\frac {N_b^r-q_1}{2}\rfloor,N_a^r-q_1\}, \min\{N_b^r-N_a^r, d_{23}\}\}, \nonumber  \\   %\label{eq38a}
&q_3= 2{\min}^+ \{\textcolor{blue}{  \max}\{\lfloor\frac {N_b^r-q_1-2q_2}{3}\rfloor,\lfloor\frac {N_a^r-q_1-q_2}{3}\rfloor\}, \bar d_{23}\}, \nonumber \\
&q_4= {\min}^+ \{\textcolor{blue}{  \max}\{N_b^r-q_1-2q_2-3q_3,\lfloor\frac {N_a^r-q_1-q_2-3q_3}{2}\rfloor\},d_{22}-q_3\}, \nonumber\\
&q_5= {\min}^+ \{\textcolor{blue}{  \max}\{\lfloor\frac {N_b^r-q_1-2q_2-3q_3-q_4}{2}\rfloor,\lfloor\frac {N_a^r-q_1-q_2-3q_3-2q_4}{2}\rfloor\},d_{24}\}, \nonumber\\
&q_6={\min}^+\{\textcolor{blue}{  \min}\{N_b^r-q_1-2q_2-3q_3-q_4-2q_5,N_a^r-q_1-q_2-3q_3-2q_4-2q_5\}, d_{12} \}, \nonumber\\
&{\small q_7={\min}^+\{\textcolor{blue}{  \min}\{N_b^r-q_1-2q_2-3q_3-q_4-2q_5-q_6,N_a^r-q_1-q_2-3q_3-2q_4-2q_5-q_6\}, d_{14} \}}. \label{eqqn}
\end{align}}
\setcounter{equation}{37}
\hrulefill
%\vspace*{-5pt}
\end{figure*}

In this case, and based on Table I it holds that $d_{11}=d_{13}=0$. Thus, we only need to consider the rankings of
$Sub_{22}$ and $Sub_{23}$. By definition, for the precoding vector pairs from $Sub_{22}$
\emph{Bob} is self-interference free while \emph{Alice} suffers from self-interference; for
the precoding vector pairs from $Sub_{23}$ the situation reverses. Combined with
(\ref{eqc1a}) and (\ref{eqc1b}), one can see that if
\emph{Alice} has a greater interference free receive subspace,
$Sub_{22}$ is of higher priority; otherwise, $Sub_{23}$ is of higher priority.
For example, consider the case $N_b^r=1$ and $N_a^r=2$, selecting
one pair from $Sub_{22}$ achieves a sum S.D.o.F. of 2, while selecting one pair
from $Sub_{23}$ can only achieve a sum S.D.o.F. of 1.

So far, we have obtained the specific rankings for all the subsets, based on which,
we include precoding vector pairs into the precoding matrix pair $({\bf V}_a, {\bf V}_b)$ one by one,
until the further adding of precoding vector pairs could not help increase the achievable sum S.D.o.F..
This constructive method also provides us a way to determine the maximum achievable S.D.o.F..
In the following, we will consider two distinct subcases,
i.e., the subcase of $N_b^r \ge N_a^r$ and the subcase of $N_b^r < N_a^r$.

i) \emph{For the subcase of $N_b^r \ge N_a^r$.}

Following (\ref{eqrank}) and the rankings discussed above, we first
pick pairs from $Sub_{21}$. Until there are no available pairs in $Sub_{21}$, we next consider pairs from $Sub_{22}$ and $Sub_{23}$.
$Sub_{23}$ has higher priority and its pairs are
at the first consideration, since \emph{Bob} can see a bigger interference free receive subspace.
%$N_b^r \ge N_a^r$ and the dimension difference of the interference free receive subspace \emph{Bob} and \emph{Alice} can see
%remains unchanged after we include pairs into $({\bf V}_a, {\bf V}_b)$ from $Sub_{21}$.
However, as we include one pair from $Sub_{23}$,
the dimension of the available interference free receive subspace \emph{Bob} and \emph{Alice} respectively decreases by two and one.
After selecting $N_b^r - N_a^r+1$ pairs from $Sub_{23} $, % (provided that $Sub_{23} $ has so many pairs),
\emph{Alice} has a greater interference free receive subspace, which indicates that
$Sub_{22}$ has higher priority. Similarly, as we include one pair from $Sub_{22}$,
the dimension of the available interference free receive subspace that \emph{Alice} and \emph{Bob} can see
respectively decreases by one and two.
Thus, following the selection of one pair from $Sub_{23}$, the pairs from $Sub_{22}$ have higher priority. Summarizing the
above observations, after picking all the pairs from $Sub_{21}$, we will first select
$\min \{d_{23}, N_b^r - N_a^r\}$ pairs from $Sub_{23} $; we then select
pairs from $Sub_{22} $ and $ Sub_{23}$ in turn, one by one,
until there are no more available pairs in $Sub_{22}$ or $ Sub_{23}$. Subsequently, we select pairs from
the remaining pairs from $Sub_{22} $ or $ Sub_{23}$, followed by $Sub_{\rm L}$.
%Since $Sub_{24}$ has higher priority than $Sub_{12}\cup Sub_{14}$ except for the case in which
%the dimension of the interference free receive subspace of \emph{Alice} or \emph{Bob} is less than one.

Based on the above constructive method, we now can determine the maximum achievable S.D.o.F..
Divide $Sub_{23}$ into two subsets, i.e., $Sub_{23}^1$ and $Sub_{23}^2$,
with a number of $\min \{d_{23}, N_b^r - N_a^r\}$ and
$\bar d_{23} \triangleq d_{23}-\min \{d_{23}, N_b^r - N_a^r\}$ precoding vector pairs, respectively.
Assume that $d_{ 22} \ge \bar d_{23} $. Then,
we will first run out of pairs of $Sub_{23}^2$.
Therefore, in the construction of $({\bf V}_a, {\bf V}_b)$,
the subsets are ranked as
\begin{align}
Sub_{21} \succ Sub_{23}^1 \succ \{Sub_{22}, Sub_{23}^2\} \succ
Sub_{22} \succ Sub_{\rm L}, \nonumber
\end{align}
where by $\{Sub_{22}, Sub_{23}^2\}$ we
mean that we select pairs from $Sub_{22} $ and $ Sub_{23}$ in turn.
Let $q_1$, $q_2$, $q_3$, $q_4$, $q_5$, $q_6$ and $q_7$ be the number of precoding vector pairs we pick from $Sub_{21}$, $Sub_{23}^1$,
$ \{Sub_{22}, Sub_{23}^2\}$, $Sub_{22}$, $Sub_{24}$, $Sub_{12} $ and $Sub_{14}$, respectively.
Since we stop picking until the further adding of precoding vector pairs could not help increase the achievable sum S.D.o.F.,
the expressions of $q_1$, $q_2$, $q_3$, $q_4$, $q_5$, $q_6$ and $q_7$ can be written as in (\ref{eqqn}) at the top of this page.

According to (\ref{eqc1a}), the achieved S.D.o.F. of the \emph{Alice}-\emph{Bob} channel is
\begin{align}
&d_{\rm I}^a=\min\{(N_b^r-q_2-\frac{q_3}{2}-q_5-q_7)^+,\sum\limits_{i=1}^6 q_i\}. \nonumber
\end{align}
According to (\ref{eqc1b}), the achieved S.D.o.F. of the \emph{Bob}-\emph{Alice} channel is
\begin{align}
&d_{\rm I}^b=\min\{(N_a^r-q_2-\frac{q_3}{2}-q_4-q_5-q_6)^+,(\sum\limits_{i=1}^7 q_i)-q_6\}. \nonumber
\end{align}
Therefore, the maximum achievable sum S.D.o.F. is
\begin{align}
&d_s^{\rm sum}=d_{\rm I}^b+d_{\rm I}^a.\label{eq40}
\end{align}

\emph{Example 1:} Consider the case $(N_a^t, N_a^r)=(5, 2)$, $(N_b^t, N_b^r)=(4, 3)$ and $N_e=5$.
Based on Table I, the maximum number of linearly independent precoding vector pairs in each subset is
$d_{ 11}=d_{ 12}=d_{ 13}=d_{ 14}=d_{ 21}=0$, $d_{ 22}=1$, $d_{ 23}=2$, $d_{ 24}=1$.
Since $N_b^r \ge N_a^r$, we divide $Sub_{23}$ into two subsets, i.e., $Sub_{23}^1$ and $Sub_{23}^2$,
with each consisting of one precoding vector pair.
By the rankings derived above, i.e., $Sub_{23}^1 \succ \{Sub_{22}, Sub_{23}^2\} \succ
Sub_{22} \succ Sub_{24}$, we first select a precoding vector pair
from $Sub_{23}^1$, i.e, $({\bf v}_a^1, {\bf v}_b^1)$. The dimension of the remaining available interference free receive subspace
of \emph{Alice} and \emph{Bob} are $N_a^r-2=1$ and $N_b^r-1=1$, respectively.
For the remaining precoding vector pairs, we stop selecting after we take one more from
$Sub_{22}$, i.e., $({\bf v}_a^2, {\bf v}_b^2)$, since
additional precoding vector pairs will introduce extra
CCI without increasing the sum S.D.o.F.. For $Sub_{23}^1 $, it holds that ${\bf H}_{aa}{\bf v}_a^1 = {\bf 0}$.
for $Sub_{22}$, it holds that ${\bf H}_{bb}{\bf v}_b^2 = {\bf 0}$.
Therefore, for ${\bf V}_a=[{\bf v}_a^1, {\bf v}_a^2]$ and ${\bf V}_b=[{\bf v}_b^1, {\bf v}_b^2]$, with (\ref{eqc1a}) and (\ref{eqc1b})
it holds that $d_s^a=1$ and $d_s^b=2$. Concluding,
a sum S.D.o.F. of 3 can be achieved.

The precoding vector pair selection procedure for the case of
$d_{ 22} \le \bar d_{23} $ is similar to that for the case of $d_{ 22} > \bar d_{23} $.
Moreover, the maximum achievable sum S.D.o.F. can be obtained by (\ref{eq40}),
with the place of $ \bar d_{23} $ and $d_{ 22} $ in (\ref{eqqn}) exchanged.

ii) The discussion for the subcase of $N_b^r < N_a^r$
is omitted since it is similar to that for the subcase of $N_b^r \ge N_a^r$.

\subsection{$N_a^t \le N_e+N_a^r$, $N_b^t > N_e+N_b^r$.}
In this case,  and based on Table I it holds that $d_{11}=d_{23}=d_{24}=0$.
Thus, we only need to consider the rankings of $Sub_{13}$ and $Sub_{22}$.
By definition, for the precoding vector pairs from $Sub_{13}$, ${\bf v}_a={\bf 0}$ and
\emph{Bob} is self-interference free; for the precoding vector pairs from $Sub_{22}$,
\emph{Bob} is self-interference free while \emph{Alice} suffers from self-interference.
Combined with (\ref{eqc1a}) and (\ref{eqc1b}), one can see that
$Sub_{22}$ has higher priority than $Sub_{13}$ except for the case in which
the dimension of the available interference free receive subspace \emph{Bob} can see is zero.
Divide $Sub_{22}$ into two subsets, i.e., $Sub_{22}^1$ and $Sub_{22}^2$,
with a number of $\min \{d_{22}, N_b^r\}$ and
$\hat d_{22} \triangleq d_{22}-\min \{d_{22}, N_b^r \}$ precoding vector pairs, respectively.
Based on what discussed above, we consider the following rankings:
\begin{align}
Sub_{21} \succ Sub_{22}^1 \succ Sub_{13}\succ Sub_{22}^2 \succ Sub_{12}=Sub_{14}. \nonumber
\end{align}

Let $\zeta_1$, $\zeta_2$, $\zeta_3$, $\zeta_4$, $\zeta_5$ and $\zeta_6$ be the number of precoding vector pairs we pick from
$Sub_{21}$, $ Sub_{22}^1$, $ Sub_{13}$, $ Sub_{22}^2$, $Sub_{12}$ and $Sub_{14}$, respectively.
We stop picking until the further adding of precoding vector pairs could not help increase the achievable sum S.D.o.F..
Then, the expressions of $\zeta_1$, $\zeta_2$, $\zeta_3$, $\zeta_4$, $\zeta_5$ and $\zeta_6$ are as follows:
\begin{align}
&\zeta_1= {\min}^+ \{\textcolor{blue}{  \max}\{N_b^r, N_a^r\}, d_{21} \}, \nonumber  \\  %\label{eq38a}
&\zeta_2= {\min}^+ \{ \textcolor{blue}{  \max} \{N_b^r-\zeta_1, \lfloor\frac {N_a^r-\zeta_1}{2}\rfloor\}, \min \{d_{22}, N_b^r\}\}, \nonumber \\
&\zeta_3= {\min}^+ \{N_a^r-\zeta_1-2\zeta_2, d_{13}\},\nonumber \\
&\zeta_4= {\min}^+ \{\textcolor{blue}{  \max}\{N_b^r-\zeta_1-\zeta_2,\lfloor\frac {N_a^r-\zeta_1-2\zeta_2-\zeta_3}{2}\rfloor\}, \hat d_{22}\}, \nonumber \\
&\zeta_5= {\min}^+ \{\textcolor{blue}{  \min}\{N_b^r-\zeta_1-\zeta_2-\zeta_4,\bar N_a^r\}, d_{12}\}, \nonumber \\
&\zeta_6= {\min}^+ \{\textcolor{blue}{  \min}\{N_b^r-\zeta_1-\zeta_2-\zeta_4-\zeta_5,\bar N_a^r-\zeta_5\}, d_{14}\}, \nonumber %\label{eq42}
\end{align}
with $\bar N_a^r=N_a^r-\zeta_1-2\zeta_2-\zeta_3-2\zeta_4$.

According to (\ref{eqc1a}), the achieved S.D.o.F. of the \emph{Alice}-\emph{Bob} channel is
\begin{align}
&d_{\rm II}^a=\min\{(N_b^r-\zeta_6)^+,\zeta_1+\zeta_2+\zeta_4+\zeta_5\}, \nonumber
\end{align}
According to (\ref{eqc1b}), the achieved S.D.o.F. of the \emph{Bob}-\emph{Alice} channel is
\begin{align}
&d_{\rm II}^b=\min\{(N_a^r-\zeta_2-\zeta_4-\zeta_5)^+,\zeta_1+\zeta_2+\zeta_3+\zeta_4+\zeta_6\}.\nonumber
\end{align}
Therefore, the maximum achievable sum S.D.o.F. is
\begin{align}
&d_s^{\rm sum}=d_{\rm II}^b+d_{\rm II}^a. \label{eq43}
\end{align}

\emph{Example 2:} Consider the case $(N_a^t, N_a^r)=(4, 6)$, $(N_b^t, N_b^r)=(8, 2)$ and $N_e=5$.
Based on Table I, the number of candidate precoding vector pairs in each subset is
$d_{ 11}=d_{ 12}=d_{ 21}=d_{ 23}=d_{ 24}=0$, $d_{ 13}=1$, $d_{ 14}=2$, $d_{ 22}=4$.
Since $ Sub_{22}^1 \succ Sub_{13} \succ Sub_{22}^2 \succ  Sub_{14}$, the precoding vector pairs of $Sub_{22}^1$ are
at the first consideration. Thus, we first select two precoding vector pairs
from $Sub_{22}^1$, i.e, $({\bf v}_a^1, {\bf v}_b^1)$ and $({\bf v}_a^2, {\bf v}_b^2)$.
The dimension of the remaining available interference free receive subspace
of \emph{Alice} and \emph{Bob} are $N_a^r-4=2$ and $N_b^r-2=0$, respectively.
For the remaining pairs, we stop after we take one more from $Sub_{13}$, i.e.,
$({\bf v}_a^3, {\bf v}_b^3)$, since additional precoding vector will introduce extra CCI without increasing the sum S.D.o.F..
%We then stop, since there are
%no available interference free receive subspace at both \emph{Alice} and \emph{Bob}.
For $Sub_{22}$, it holds that ${\bf H}_{bb}{\bf v}_b^1 = {\bf 0}$ and ${\bf H}_{bb}{\bf v}_b^2 = {\bf 0}$;
for $Sub_{13}$, it holds that ${\bf H}_{bb}{\bf v}_b^3 = {\bf 0}$.
Therefore, for ${\bf V}_a=[{\bf v}_a^1, {\bf v}_a^2, {\bf v}_a^3]$ and
${\bf V}_b=[{\bf v}_b^1, {\bf v}_b^2, {\bf v}_b^3]$, with (\ref{eqc1a}) and (\ref{eqc1b})
it holds that $d_s^a=2$ and $d_s^b=3$. Concluding,
a sum S.D.o.F. of 5 can be achieved.

\begin{table*}[!htp]
\caption{An algorithm for constructing $({\bf V}_a, {\bf V}_b)$ which achieves the maximum sum S.D.o.F..}
 \begin{tabular}{p{0.95\linewidth}}
  \hline
\textbf{Step 1}. {Initialization}: compute $d_{ij}$, $i=1,2$, $j=1,\cdots,4$, according to Table I;
compute the precoding vector pairs $({\bf v}_a, {\bf v}_b)$ for $Sub_{ij}$, $i=1,2$, $j=1,\cdots,4$, with (\ref{eq10})(\ref{eq13})
(\ref{eq16})(\ref{eq19})(\ref{eq23a})(\ref{eq23b})(\ref{eq27a})(\ref{eq27b})(\ref{eq31a})(\ref{eq31b})(\ref{eq34a})(\ref{eq34b}), respectively.
 \\\\
\textbf{Step 2}. Rankings for different number of antennas: \\
\emph{Case A: $N_a^t \le N_e+N_a^r$ and $N_b^t\le N_e+N_b^r$. }\\
\quad i) If $N_b^r \ge N_a^r$, divide $Sub_{23}$ into $Sub_{23}^1$ and $Sub_{23}^2$,
with a number of $\min \{d_{23}, N_b^r - N_a^r\}$ and $\bar d_{23} \triangleq d_{23}-\min \{d_{23}, N_b^r - N_a^r\}$ \\
\quad precoding vector pairs, respectively.\\
\quad\quad a) If $d_{ 22} \ge \bar d_{23} $, the subsets are ranked as
$Sub_{21} \succ Sub_{23}^1 \succ \{Sub_{22}, Sub_{23}^2\} \succ
Sub_{22} \succ Sub_{24} \cup Sub_{12}\cup Sub_{14}$. \\ %Here, we divide $Sub_{22}$  \\
%\quad\quad\quad into two subsets, i.e., $Sub_{22}^1$ and $Sub_{22}^2$,
%with a number of $\bar d_{23} $ and $ d_{22}- \bar d_{23}$ precoding vector pairs, respectively. \\
\quad\quad b) If $d_{ 22} < \bar d_{23} $, the subsets are ranked as $Sub_{21} \succ Sub_{23}^1 \succ \{Sub_{22}, Sub_{23}^{2}\} \succ
Sub_{23}^{2} \succ Sub_{24} \cup Sub_{12}\cup Sub_{14}$. %Here, we divide \\
%\quad\quad\quad $Sub_{23}^2$ into two subsets, i.e., $Sub_{23}^{21}$ and $Sub_{23}^{22}$,
%with a number of $ d_{22} $ and $\bar d_{23}- d_{22}$ precoding vector pairs, respectively.
\\
\quad ii) If $N_b^r < N_a^r$, divide $Sub_{22}$ into $Sub_{22}^1$ and $Sub_{22}^2$,
with a number of $\min \{d_{22}, N_a^r - N_b^r\}$ and $\bar d_{22} \triangleq d_{22}-\min \{d_{22}, N_a^r - N_b^r\}$ \\
\quad precoding vector pairs, respectively.\\
\quad\quad a) If $d_{ 23} \ge \bar d_{22} $, the subsets are ranked as
$Sub_{21} \succ Sub_{22}^1 \succ \{Sub_{23}, Sub_{22}^2\} \succ
Sub_{23} \succ Sub_{24} \cup Sub_{12}\cup Sub_{14}$. %Here, we divide $Sub_{23}$ \\
%\quad\quad\quad into two subsets, i.e., $Sub_{23}^1$ and $Sub_{23}^2$,
%with a number of $\bar d_{22} $ and $ d_{23}- \bar d_{22}$ precoding vector pairs, respectively.
\\
\quad\quad b) If $d_{ 23} < \bar d_{22} $, the subsets are ranked as $Sub_{21} \succ Sub_{22}^1 \succ \{Sub_{23}, Sub_{22}^{2}\} \succ
Sub_{22}^{2} \succ Sub_{24} \cup Sub_{12}\cup Sub_{14}$. %Here, we divide \\
%\quad\quad\quad $Sub_{22}^2$ into two subsets, i.e., $Sub_{22}^{21}$ and $Sub_{22}^{22}$,
%with a number of $ d_{23} $ and $\bar d_{22}- d_{23}$ precoding vector pairs, respectively.
\\
\emph{Case B: $N_a^t \le N_e+N_a^r$, $N_b^t > N_e+N_b^r$. }\\
\quad Divide $Sub_{22}$ into two subsets, i.e., $Sub_{22}^1$ and $Sub_{22}^2$,
with a number of $\min \{d_{22}, N_b^r\}$ and
$\hat d_{22} \triangleq d_{22}-\min \{d_{22}, N_b^r \}$ precoding vector pairs, respectively.
The subsets are ranked as
$Sub_{21}  \succ Sub_{22}^1 \succ Sub_{13} \succ Sub_{22}^2 \succ Sub_{12}=Sub_{14}$.
\\
\emph{Case C: $N_a^t > N_e+N_a^r$, $N_b^t \le N_e+N_b^r$. }\\
\quad Divide $Sub_{23}$ into two subsets, i.e., $Sub_{23}^1$ and $Sub_{23}^2$,
with a number of $\min \{d_{23}, N_a^r\}$ and
$\hat d_{23} \triangleq d_{23}-\min \{d_{23}, N_a^r \}$ precoding vector pairs, respectively.
The subsets are ranked as
$Sub_{21} \succ  Sub_{23}^1 \succ Sub_{11} \succ  Sub_{23}^2 \succ Sub_{12}=Sub_{14}$.
\\
\emph{Case D: $N_a^t > N_e+N_a^r$, $N_b^t > N_e+N_b^r$. }\\
\quad The subsets are ranked as
$Sub_{21} \succ Sub_{11}=Sub_{13}  \succ Sub_{12}=Sub_{14}$.
\\ \\
\textbf{Step 3}. Include precoding vector pairs into $({\bf V}_a, {\bf V}_b)$ one by one from $Sub_{ij}$, $i=1,2$, $j=1,\cdots,4$,
according to the rankings listed in Step 2.
Stop when adding more precoding vector pairs does increase the achievable sum S.D.o.F..
\\\\
\textbf{Step 4}. {Output:} $({\bf V}_a, {\bf V}_b)$.
\\

\hline
\end{tabular}
\end{table*}

\subsection{$N_a^t > N_e+N_a^r$, $N_b^t\le N_e+N_b^r$.}
In this case, and based on Table I it holds that $d_{13}=d_{22}=d_{24}=0$.
Thus, we only need to consider the rankings of $Sub_{11}$ and $Sub_{23}$.
By definition, for the precoding vector pairs from $Sub_{11}$, ${\bf v}_b={\bf 0}$ and
\emph{Alice} is self-interference free; for the precoding vector pairs from $Sub_{23}$,
\emph{Alice} is self-interference free while \emph{Bob} suffers from self-interference.
Combined with (\ref{eqc1a}) and (\ref{eqc1b}), one can see that
$Sub_{23}$ has higher priority than $Sub_{11}$ except for the case in which
the dimension of the available interference free receive subspace \emph{Alice} can see is zero.
Divide $Sub_{23}$ into two subsets, i.e., $Sub_{23}^1$ and $Sub_{23}^2$,
with a number of $\min \{d_{23}, N_a^r\}$ and
$\hat d_{23} \triangleq d_{23}-\min \{d_{23}, N_a^r \}$ precoding vector pairs, respectively.
Based on what discussed above, we consider the following rankings:
\begin{align}
Sub_{21}\succ Sub_{23}^1 \succ Sub_{11} \succ Sub_{23}^2 \succ Sub_{12}=Sub_{14}. \nonumber
\end{align}

Let $\eta_1$, $\eta_2$, $\eta_3$, $\eta_4$, $\eta_5$ and $\eta_6$ be the number of precoding vector pairs we pick from
$Sub_{21}$, $ Sub_{23}^1$, $ Sub_{11}$, $ Sub_{23}^2$, $Sub_{12}$ and $Sub_{14}$, respectively.
We stop picking until the further adding of precoding vector pairs could not help increase the achievable sum S.D.o.F..
Then, the expressions of $\eta_1$, $\eta_2$, $\eta_3$, $\eta_4$, $\eta_5$ and $\eta_6$ are as follows:
\begin{align}
&\eta_1= {\min}^+ \{\textcolor{blue}{  \max}\{N_a^r, N_b^r\}, d_{21} \}, \nonumber  \\  %\label{eq38a}
&\eta_2={\min}^+ \{\textcolor{blue}{  \max}\{\lfloor\frac {N_b^r-\eta_1}{2}\rfloor, N_a^r-\eta_1\}, \min \{d_{23}, N_a^r\}\}, \nonumber \\
&\eta_3= {\min}^+ \{N_b^r-\eta_1-2\eta_2, d_{11}\}, \nonumber \\
&\eta_4= {\min}^+ \{\textcolor{blue}{  \max}\{\lfloor\frac {N_b^r-\eta_1-2\eta_2-\eta_3}{2}\rfloor, N_a^r-\eta_1-\eta_2\}, \hat d_{23}\}, \nonumber\\
&\eta_5= {\min}^+ \{\textcolor{blue}{  \min}\{\bar N_b^r, N_a^r-\eta_1-\eta_2-\eta_4\}, d_{12}\}, \nonumber\\
&\eta_6= {\min}^+ \{\textcolor{blue}{  \min}\{\bar N_b^r-\eta_5, N_a^r-\eta_1-\eta_2-\eta_4-\eta_5\}, d_{14}\}, \nonumber %\label{eq45}
\end{align}
with $\bar N_b^r=N_b^r-\eta_1-2\eta_2-\eta_3-2\eta_4$.

According to (\ref{eqc1a}), the achieved S.D.o.F. of the \emph{Alice}-\emph{Bob} channel is
\begin{align}
&d_{\rm III}^a=\min\{(N_b^r-\eta_2-\eta_4-\eta_6)^+,\eta_1+\eta_2+\eta_3+\eta_4+\eta_5\}.  \nonumber
\end{align}
According to (\ref{eqc1b}), the achieved S.D.o.F. of the \emph{Bob}-\emph{Alice} channel is
\begin{align}
&d_{\rm III}^b=\min\{(N_a^r-\eta_5)^+,\eta_1+\eta_2+\eta_4+\eta_6\}. \nonumber
\end{align}
Therefore, the maximum achievable sum S.D.o.F. is
\begin{align}
&d_s^{\rm sum}=d_{\rm III}^b+d_{\rm III}^a. \label{eq46}
\end{align}

%For the case of $N_a^r \le N_b^r$, the constraint of (\ref{eq37a}) would be violated
%no later than the constraint of (\ref{eq37b}), while for the case of $N_a^r > N_b^r$,
%the constraint of (\ref{eq37b}) would be violated no later than the constraint of (\ref{eq37a}).
%Then, if we stop before one of the constraints in (\ref{eq37a}) and (\ref{eq37b}) is violated, the maximum achievable sum S.D.o.F. is
%\begin{align}
%\eta=2\eta_1+\eta_2+2\eta_3+\eta_4+\eta_5, \label{eq44}
%\end{align}
%where
%
%If we stop before both of the constraints in (\ref{eq37a}) and (\ref{eq37b}) are violated,
%the number of precoding vector pairs we pick from $Sub_{21}$, $ Sub_{11}^1$, $ Sub_{23}$, $ Sub_{11}^2$ and $Sub_{12}$ and $Sub_{14}$, i.e.,
%$\eta_i$, $i=1,\cdots, 6$, can be respectively obtained by replacing the \textcolor{blue}{\textbf{min}} operator in (\ref{eq45})
%with \textcolor{blue}{\textbf{max}}. Thus,
%
%\emph{Concluding}, the maximum achievable sum S.D.o.F. subject to the constraints of (\ref{eq37a}) and (\ref{eq37b}) is
%given in (\ref{eq44}), while that without these constraints is given in (\ref{eq46}).

\subsection{$N_a^t > N_e+N_a^r$, $N_b^t > N_e+N_b^r$.}
In this case, and based on Table I it holds that $d_{22}=d_{23}=d_{24}=0$.
Thus, we only need to consider the rankings of $Sub_{11}$ and $Sub_{13}$.
By definition, for the precoding vector pairs in $Sub_{{11}}$, ${\bf v}_b= {\bf 0}$ and \emph{Alice} is self-interference free;
for the precoding vector pairs in $Sub_{{13}}$, ${\bf v}_a={\bf 0}$ and \emph{Bob} is self-interference free.
Therefore, the relative rankings of $Sub_{11}$ and $Sub_{13}$ will not affect
the achievable sum S.D.o.F.. In the construction of $({\bf V}_a, {\bf V}_b)$, the subsets are ranked as
\begin{align}
Sub_{21} \succ Sub_{11}=Sub_{13}  \succ Sub_{12}=Sub_{14}. \nonumber
\end{align}

Let $t_1$, $t_2$, $t_3$, $t_4$ and $t_5$ be the number of precoding vector pairs we pick from
$Sub_{21} $, $Sub_{11}$, $ Sub_{13}$, $Sub_{12}$ and $ Sub_{14}$, respectively.
We stop picking until the further adding of precoding vector pairs could not help increase the achievable sum S.D.o.F..
Then, the expressions of $t_1$, $t_2$, $t_3$, $t_4$ and $t_5$ are as follows:
\begin{align}
&t_1= {\min}^+ \{\textcolor{blue}{  \max}\{N_a^r, N_b^r\}, d_{21} \}, \nonumber  \\
&t_2= {\min}^+ \{ { N_b^r-t_1}, d_{11}\}, \nonumber \\
&t_3= {\min}^+ \{ { N_a^r-t_1}, d_{13}\}, \nonumber \\
&t_4= {\min}^+ \{\textcolor{blue}{  \min}\{ N_b^r-t_1-t_2, N_a^r-t_1-t_3\}, d_{12}\}, \nonumber \\
&t_5= {\min}^+ \{\textcolor{blue}{  \min}\{ N_b^r-t_1-t_2-t_4, N_a^r-t_1-t_3-t_4\}, d_{14}\}.  \nonumber %\label{eq48}
\end{align}

According to (\ref{eqc1a}), the achieved S.D.o.F. of the \emph{Alice}-\emph{Bob} channel is
\begin{align}
&d_{\rm IV}^a=\min\{(N_b^r-t_5)^+,t_1+t_2+t_4\}. \nonumber
\end{align}
According to (\ref{eqc1b}), the achieved S.D.o.F. of the \emph{Bob}-\emph{Alice} channel is
\begin{align}
&d_{\rm IV}^b=\min\{(N_a^r-t_4)^+,t_1+t_3+t_5\}.  \nonumber
\end{align}
Therefore, the maximum achievable sum S.D.o.F. is
\begin{align}
&d_s^{\rm sum}=d_{\rm IV}^b+d_{\rm IV}^a. \label{eq49}
\end{align}

\emph{Example 3:} Consider the case $(N_a^t, N_a^r)=(7, 4)$, $(N_b^t, N_b^r)=(7, 4)$ and $N_e=2$.
Based on Table I, the maximum number of linearly independent precoding vector pairs in each subset is
$d_{ 11}=d_{ 13}=1$, $d_{ 12}=d_{ 14}=4$, $d_{ 21}=2$, $d_{ 22}=d_{ 23}=d_{ 24}=0$.
By the rankings $Sub_{21} \succ Sub_{11}=Sub_{13}  \succ Sub_{12}=Sub_{14}$,
we first select two precoding vector pairs
from $Sub_{21}$, one %, i.e, $({\bf v}_a^1, {\bf v}_b^1)$ and $({\bf v}_a^2, {\bf v}_b^2)$
precoding vector pair from $Sub_{11}$, and one %, i.e, $({\bf v}_a^3, {\bf v}_b^3)$
precoding vector pair from $Sub_{13}$. %, i.e, $({\bf v}_a^4, {\bf v}_b^4)$
As to the remaining subsets, i.e., $Sub_{12}$ and $Sub_{14}$, we can only select one
more precoding vector pair, since the remaining receive signal dimensions at %, i.e, $({\bf v}_a^5, {\bf v}_b^5)$
\emph{Alice} and \emph{Bob} are $N_a^r-3=1$ and $N_b^r-3=1$, respectively.
It is easy to verify that a sum S.D.o.F. of 7 can be achieved.

Concluding, an algorithm for constructing $({\bf V}_a, {\bf V}_b)$
which achieves the maximum sum S.D.o.F. is given in Table II.

%Then, if we stop before one of the constraints in (\ref{eq37a}) and (\ref{eq37b}) is violated, the maximum achievable sum S.D.o.F. is
%\begin{align}
%t=2t_1+t_2+t_3+t_4+t_5, \label{eq47}
%\end{align}
%with
%
%If we stop before both of the constraints in (\ref{eq37a}) and (\ref{eq37b}) are violated, the number of precoding vector pairs we pick from
%$Sub_{21} $, $Sub_{11}$, $ Sub_{13}$, $Sub_{12}$ and $ Sub_{14}$, i.e., $t_i$, $i=1,\cdots,5$,
%can be respectively obtained by replacing the \textcolor{blue}{\textbf{min}} operator in (\ref{eq48})
%with \textcolor{blue}{\textbf{max}}. Thus,
%
%\emph{Concluding}, the maximum achievable sum S.D.o.F. subject to the constraints of (\ref{eq37a}) and (\ref{eq37b}) is
%given in (\ref{eq47}), while that without these constraints is given in (\ref{eq49}).

\section{Numerical Results}

\begin{figure}[!t]
\centering
\includegraphics[width=3in]{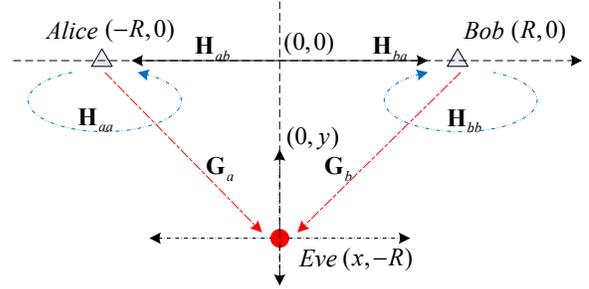}
 %where an .eps filename suffix will be assumed under latex,
% and a .pdf suffix will be assumed for pdflatex; or what has been declared via
\DeclareGraphicsExtensions. \caption{Model used in numerical experiments.}
\vspace* {-6pt}
\end{figure}

We consider a system model as illustrated in Fig. 2.
\emph{Alice} and \emph{Bob} are fixed at coordinates $(-R,0)$ and $(R,0)$ (unit: meters), $R=5$, respectively.
\emph{Eve} moves in two directions, i.e., along the $x$-coordinate from $(-15,-R)$ to $(15,-R)$, and
along the $y$-coordinate from $(0,2R)$ to $(0,0)$. Results are obtained over $10,000$ Monte Carlo runs as follows.
In each run, the channels are modeled as multipath flat fading. The effect of the
channel between any transmit-receive pair on the transmitted signal
is modeled by a multiplicative scalar of the form $d^{-c/2}e^{j\theta}$ \cite{Inaltekin09}, where
$d$ is the distance between the two nodes, $c$ is the path loss exponent and $\theta$ is a random phase,
which is taken to be uniformly distributed
within $[0, 2 \pi)$. The value of $c$ is typically in the range of 2 to 4. In our simulations we set $c=3.5$.
We assume that the distances of different combinations of transmit-receive antennas corresponding to the same
link are the same, and as such the corresponding path loss is the same.
The transmit power of each transmitting node is $P= 0$dBm. At each source, power is
equally allocated between different signal streams.
The noise power level is set as $\sigma^2=-60$dBm.
Unless otherwise specified, we set $N_a^t=3$, $N_b^r=2$, $N_b^t=3$, $N_a^r=2$, $N_e=5$.

In each figure to follow, we plot the average achievable sum secrecy transmission rate of the proposed scheme.
According to Table II, one can see that with our proposed cooperative transmission
scheme, a maximum sum S.D.o.F. of 2, i.e., an S.D.o.F. pair (1,1), can be achieved.
We compute the precoding matrix pair $({\bf V}_a, {\bf V}_b)$ by Table II,
with which we compute the achievable secrecy transmission rate of each user according to (\ref{eqSR}).
Exact knowledge of the channels is assumed in the computation.
For comparison, we also plot the average achievable sum secrecy transmission rate by some other schemes,
i.e., the one-way scheme, the wiretapped signal leakage minimization (WLSM) scheme by \cite{Tung15},
the match filter (MF) scheme and the zero-forcing (ZF) scheme.
In particular, in the one-way scheme, we set $N_a^t=5$, $N_b^t=1$ and $N_b^r=4$.
The one-way scheme can be regarded as a special case
of the proposed scheme where only \emph{Bob} operates in FD mode, and thus we rerun the simulations by Table II.
We should note that by Table II, the maximum achievable S.D.o.F.
of the one-way scheme for different number of transmit/receive antennas at \emph{Bob} and subject to $N_b=5$, is equal to 1.
%Thus, the average achievable secrecy rate for the other transmit-receive antenna pairs at \emph{Bob}
%is similar as that of the case $N_b^t=1$ and $N_b^r=4$.
The WLSM scheme \cite{Tung15} %extends the iterative algorithm of interference alignment and
optimizes the transmit/receive filters iteratively,
for the purpose of minimizing the sum of the co-channel interference power
and the message signal power leaked to \emph{Eve}. Since the proposed scheme provides
closed-form precoding matrices, it has a computational advantage over the WLSM scheme.
For the MF scheme, we select the eigenvector corresponding the maximum eigenvalue of the legitimate channels,
i.e., ${\bf H}_{ab}$ and ${\bf H}_{ba}$, as the beamforming vector.
For the ZF scheme, we select the vector falling into the null space of the self-interfering channels, i.e.,
${\bf H}_{aa}$ and ${\bf H}_{bb}$, as the beamforming vector.

Figs. 3-5 illustrate the average achievable secrecy transmission rate versus the position of
\emph{Eve} along the $x$-coordinate, for different values of the self-interference parameter $\rho$.
It shows that the proposed transmission scheme outperforms
all the other schemes.
%The line by the proposed two-way scheme is a W-type curve, the
%line by the one-way scheme is a half W-type curve, and the lines by the other schemes are V-type curves.
Interestingly, for the case of $x=0$, the proposed schemes achieve a local maximum secrecy transmission rate, while
the other schemes achieve a minimum secrecy transmission rate. This suggests that, the positions with $x=0$ are the most favorable ones.
On comparing Figs. 3-5, one can see that the average achievable
secrecy transmission rate of the proposed scheme degrades slightly with an increasing value of $\rho$, while
that of WLSM scheme degrades sharply. This suggests that as compared with the WLSM scheme, the proposed scheme is more robust
to self-interference.

\begin{figure}[!t]
\centering
\includegraphics[width=3in]{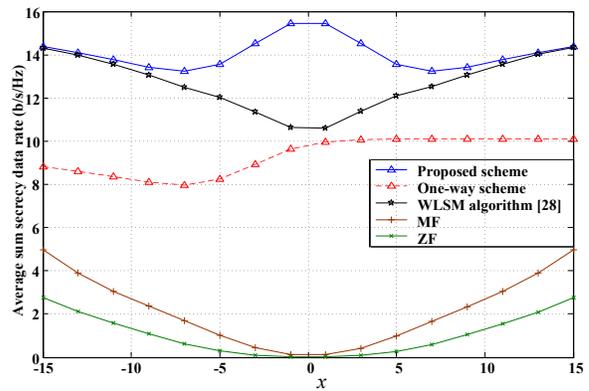}
 %where an .eps filename suffix will be assumed under latex,
% and a .pdf suffix will be assumed for pdflatex; or what has been declared via
\DeclareGraphicsExtensions. \caption{Average achievable secrecy rate versus the $x$-coordinate of
\emph{Eve}. The self-interference parameter $\rho=0$.} %, $N_a^t=N_b^t=3$, $N_b^r=N_a^r=2$, $N_e=5$
\vspace* {-6pt}
\end{figure}

\begin{figure}[!t]
\centering
\includegraphics[width=3in]{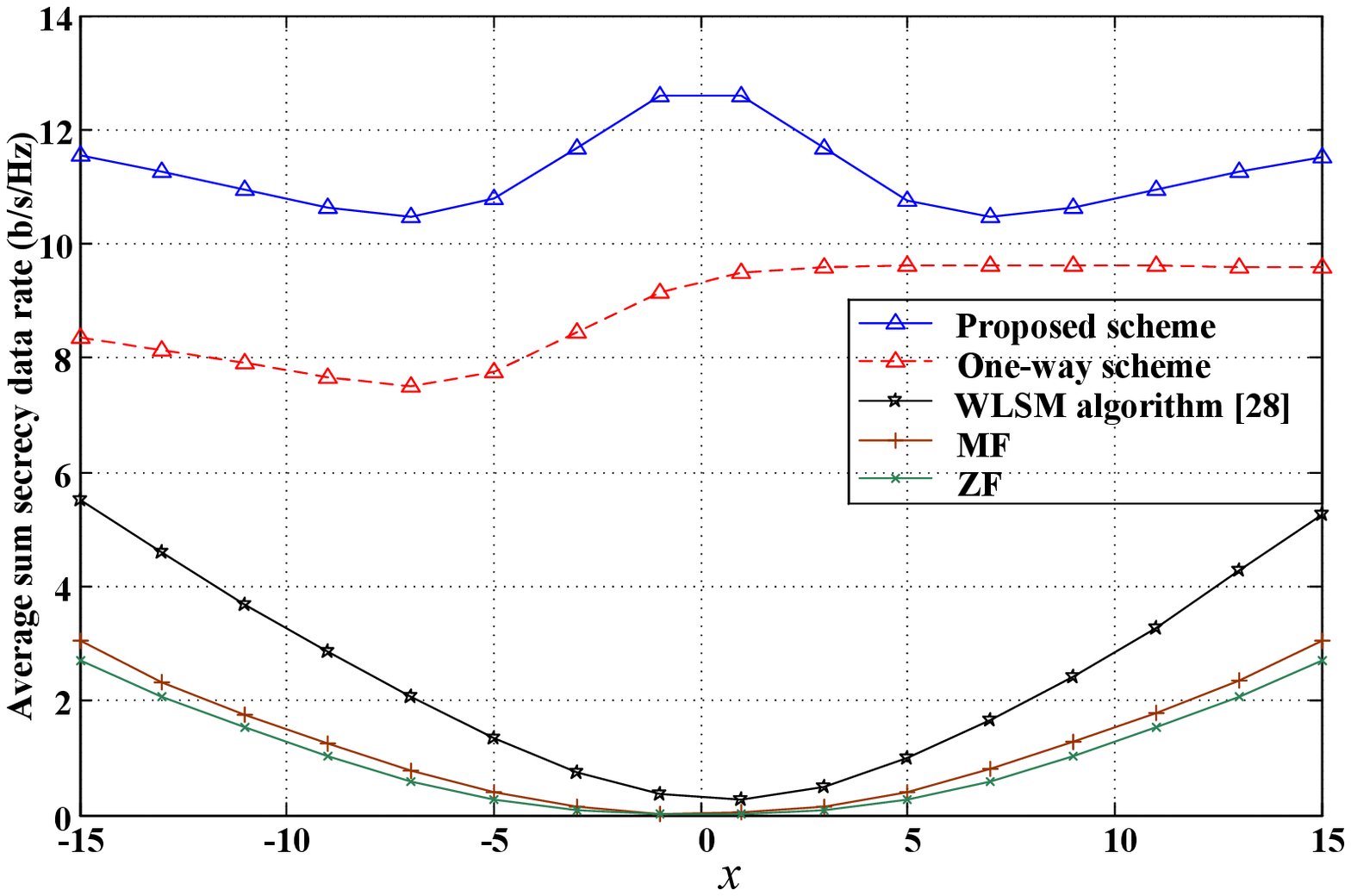}
 %where an .eps filename suffix will be assumed under latex,
% and a .pdf suffix will be assumed for pdflatex; or what has been declared via
\DeclareGraphicsExtensions. \caption{Average achievable secrecy rate versus the $x$-coordinate of
\emph{Eve}. The self-interference parameter is $\rho=0.1$.} %, $N_a^t=N_b^t=3$, $N_b^r=N_a^r=2$, $N_e=5$
\vspace* {-6pt}
\end{figure}

\begin{figure}[!t]
\centering
\includegraphics[width=3in]{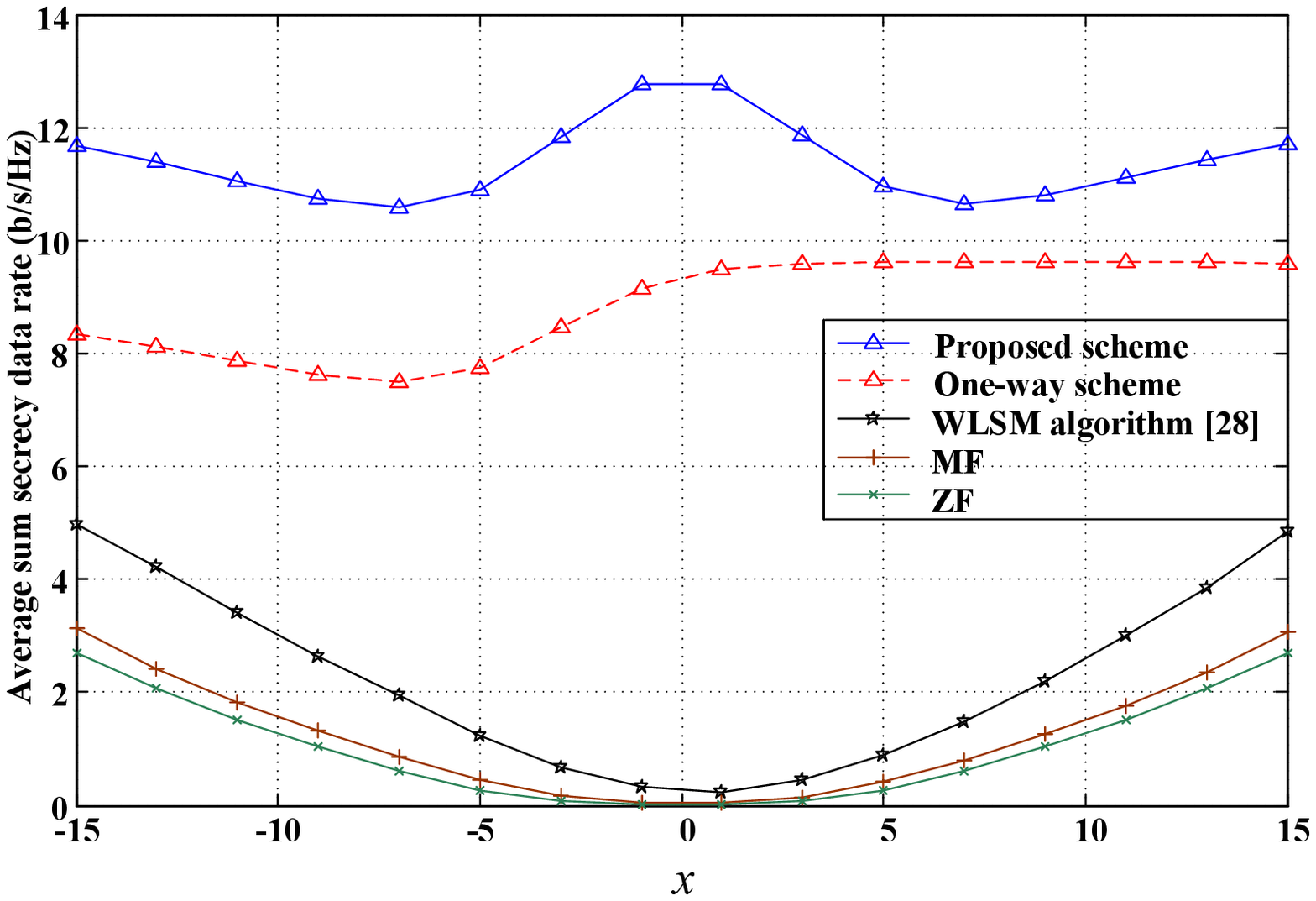}
 %where an .eps filename suffix will be assumed under latex,
% and a .pdf suffix will be assumed for pdflatex; or what has been declared via
\DeclareGraphicsExtensions. \caption{Average achievable secrecy rate versus the $x$-coordinate of
\emph{Eve}. The self-interference parameter is $\rho=1$.} %, $N_a^t=N_b^t=3$, $N_b^r=N_a^r=2$, $N_e=5$
\vspace* {-6pt}
\end{figure}

\begin{figure}[!t]
\centering
\includegraphics[width=3in]{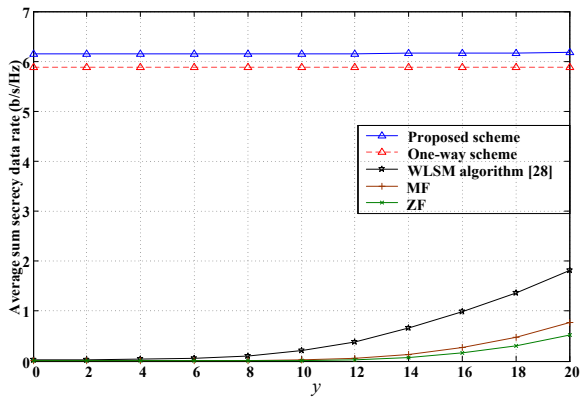}
 %where an .eps filename suffix will be assumed under latex,
% and a .pdf suffix will be assumed for pdflatex; or what has been declared via
\DeclareGraphicsExtensions. \caption{Average achievable secrecy rate versus the $y$-coordinate of
\emph{Eve}. The self-interference parameter is $\rho=1$.} %, $N_a^t=N_b^t=3$, $N_b^r=N_a^r=2$, $N_e=5$
\vspace* {-6pt}
\end{figure}

%\begin{figure}[!t]
%\centering
%\includegraphics[width=3in]{Nat3Nbr2-Nbt3Nar2-Ne5_distance(R=5_EveMovesAlongY)rho=0}
% %where an .eps filename suffix will be assumed under latex,
%% and a .pdf suffix will be assumed for pdflatex; or what has been declared via
%\DeclareGraphicsExtensions. \caption{Average achievable secrecy rate versus the $y$-coordinate of
%\emph{Eve}. The self-interference parameter $\rho=0$.} %, $N_a^t=N_b^t=3$, $N_b^r=N_a^r=2$, $N_e=5$
%\vspace* {-6pt}
%\end{figure}

Fig. 6 illustrates the average achievable secrecy transmission rate versus the position of
\emph{Eve} along the $y$-coordinate for the strong self-interference case. % and the no self-interference case, respectively.
We set $R=10$. % and $R=5$ for Fig. 7 and 8, respectively.
It can be seen that, for both cases and with a decreasing value of $y$, the achievable secrecy transmission
rate of the proposed scheme remains constant. In contrast, the achievable secrecy transmission
rate of the other schemes dwindles sharply, and it is almost zero as $y$ approaches zero.
We have the following explanation. As \emph{Eve} moves along $y$-coordinate and closer to
both \emph{Alice} and \emph{Bob}, the message signal power received by \emph{Eve} improves.
%co-channel interference power received by \emph{Eve} increases in proportion to the message signal power.
The proposed scheme aligns the message signal and the
co-channel interference signal at \emph{Eve}, and thus keeps \emph{Eve}'s eavesdropping capability
constant. In contrast, the other schemes do not perform such signal alignment, and so their
achievable secrecy transmission rate decreases.
%which leads to a constant achievable secrecy transmission rate.

According to existing knowledge on wireless communications, in order
to tell apart all the signal steams, the sum number of signal streams which
the legitimate receiver receives should be no greater than the total number of receive antennas, i.e,
\begin{subequations}
\begin{align}
&{\rm {rank}} \{{\bf H}_{aa}{\bf V}_a\}+{\rm {rank}} \{{\bf H}_{ab}{\bf V}_b\} \le N_a^r, \label{eq37a} \\
&{\rm {rank}} \{{\bf H}_{ba}{\bf V}_a\}+{\rm {rank}} \{{\bf H}_{bb}{\bf V}_b\} \le N_b^r. \label{eq37b}
\end{align}
\end{subequations}
Indeed, a sufficient condition for the proposed scheme to stop selecting
is that both (\ref{eq37a}) and (\ref{eq37b}) are violated.
In Fig. 7, we compare the average achievable secrecy rate
of two different schemes, i.e., the proposed scheme, and the proposed scheme subject to (\ref{eq37a})(\ref{eq37b}).
Here, we set $N_a^t=4$, $N_b^r=2$, $N_b^t=5$, $N_a^r=3$, $N_e=5$.
By Table I, it holds that $d_{22}=2$, $d_{23}=d_{24}=1$, and all the other $d_{ij}$'s are zero.
According to Table II, for the proposed scheme, we will
respectively select one precoding vector pair from $Sub_{22}$ and $Sub_{23}$, and
an S.D.o.F. pair (1, 2) can be achieved; for the case subject to (\ref{eq37a})(\ref{eq37b}), we will select one precoding vector pair from
$Sub_{22}$ or $Sub_{23}$, and an S.D.o.F. pair (1, 1) can be achieved.
We rerun the simulations, with the $x$-coordinate of
\emph{Eve} varying from $(-40,-R)$ to $(40,-R)$.
Fig. 7 shows that, except for the case in which \emph{Eve} is in a medium distance from \emph{Alice} or \emph{Bob},
the proposed scheme outperforms that with constraints.
This can be explained as follows. First, when \emph{Eve} is close to
\emph{Alice} and \emph{Bob}, the co-channel interference is strong, and it helps shield the message signal from \emph{Eve}.
Thus, the proposed scheme, which achieves a greater
sum S.D.o.F., outperforms that the scheme with the constraints in (\ref{eq37a})(\ref{eq37b}).
%Moreover, when \emph{Eve} is close enough to the source,
%the secrecy rate performance by the schemes is similar as that of the far enough case.
Second, when \emph{Eve} moves to the left and in a medium distance from \emph{Bob}, at \emph{Eve} the
co-channel interference power by \emph{Bob} becomes smaller, and considering multiple signal streams
at \emph{Bob} would worsen this situation, which results in worse shielding of
the message signal from \emph{Alice}. Thus, the scheme with the constraints in (\ref{eq37a})(\ref{eq37b})
outperforms the proposed scheme.
Third, when \emph{Eve} is far enough away from \emph{Alice} and \emph{Bob},
it almost receives nothing. Thus, the message signals from \emph{Alice} and \emph{Bob} are naturally secure.
These observations give us another clue showing that, in contrast to a harm role in the network without secrecy constraints,
CCI acts positively in the network with secrecy constraints.
\begin{figure}[!t]
\centering
\includegraphics[width=3in]{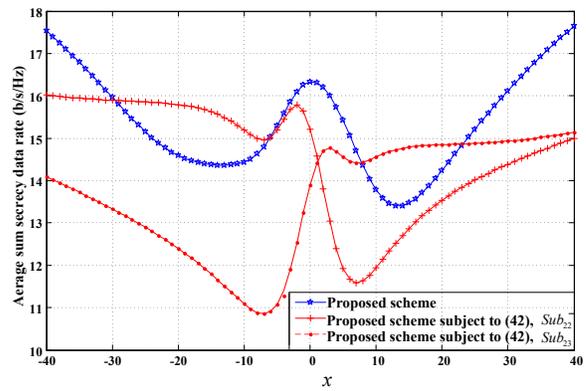}
 %where an .eps filename suffix will be assumed under latex,
% and a .pdf suffix will be assumed for pdflatex; or what has been declared via
\DeclareGraphicsExtensions. \caption{Average achievable secrecy rate versus the $x$-coordinate of
\emph{Eve}. The self-interference parameter is $\rho=1$.} %, $N_a^t=N_b^t=3$, $N_b^r=N_a^r=2$, $N_e=5$
\vspace* {-6pt}
\end{figure}

\begin{figure}[!t]
\centering
\includegraphics[width=3in]{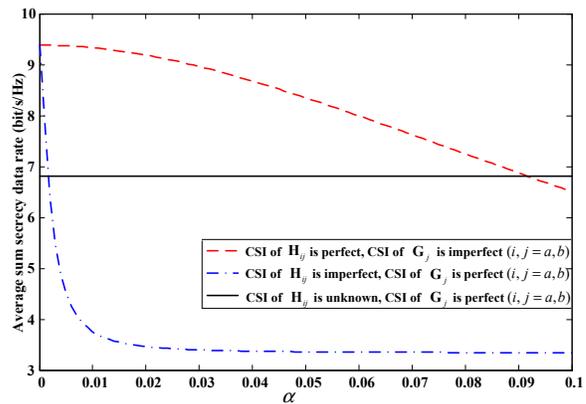}
 %where an .eps filename suffix will be assumed under latex,
% and a .pdf suffix will be assumed for pdflatex; or what has been declared via
\DeclareGraphicsExtensions. \caption{Average achievable secrecy rate versus the channel uncertainty.
The self-interference parameter is $\rho=1$.} %, $N_a^t=N_b^t=3$, $N_b^r=N_a^r=2$, $N_e=5$
\vspace* {-6pt}
\end{figure}

In Fig. 8, we examine the secrecy rate performance in the presence of imperfect channel estimates.
%In Fig. 9, we plot the achievable secrecy rate with imperfect CSI of the eavesdropper's channels.
{We model imperfect CSI through a Gauss-Markov uncertainty of the form \cite{Nosrat11}
%The channel from \emph{S}$_i$ ($i=1,2$) to \emph{E} is
\begin{align}
{\bf G}_{i}=d_{i}^{-c/2}\left(\sqrt{1-\alpha^2}\bar{\bf G}_{i}+\alpha\Delta\bar{\bf G}_{i}\right), i=a, b, \label{eqImpCSI}
\end{align}
where $0 \le \alpha \le 1$ denotes the channel uncertainty. $\alpha=0$ and $\alpha=1$
correspond to perfect channel knowledge and no CSI knowledge, respectively.}
%$\bar{\bf G}_{i}$ represents the estimated channel part at \emph{S}$_i$.
The entries of $\bar{\bf G}_{i}$ are $e^{j\theta}$ with $\theta$ be a random phase
uniformly distributed within $[0, 2 \pi)$.
$\Delta\bar{\bf G}_{i}\sim \mathcal{CN}(\bf{0},\bf{I})$
represents the Gaussian error channel matrices.
$d_{i}$ denotes the distance from \emph{Alice} or \emph{Bob}.
With the same channel model as in (\ref{eqImpCSI}), we model the channel uncertainty of the channels ${\bf H}_{ij}$, $i, j=a, b$.
We reset $N_a^t=N_b^t=4$, $N_a^r=N_b^r=3$, and $N_e=4$. According to Table II, an S.D.o.F. pair of (2, 2) can be achieved.
We construct the precoding matrices ${\bf V}_a$ and ${\bf V}_b$ with the estimated channels.
It can be observed that the achievable secrecy rate drops with the
increase of uncertainty in the channels ${\bf G}_{i}$, $i=a,b$, or the channels ${\bf H}_{ij}$, $i, j=a, b$.
This should be expected, since the eavesdropping channels ${\bf G}_{i}$, $i=a,b$,
and also the self-interference channels ${\bf H}_{ii}$, $i=a,b$, enter in the construction of the precoding matrices.
We should note that, when the self-interfering channels, i.e., ${\bf H}_{ii}$, $i=a,b$, are unknown, by some slight changes
the proposed scheme still works. In particular, since ${\bf H}_{ii}$, $i=a,b$ are unknown, we are not able to
obtain a precoding vector along which the message signal
does not interfere with the unintended user, and so in Table I, $d_{11}=d_{13}=d_{21}=d_{22}=d_{23}=0$.
Substituting the other $d_{ij}$'s into Table II, we can construct a precoding matrix pair which is independent
of the channels ${\bf H}_{ij}$, $i, j=a, b$. As expected, in Fig. 8 it shows that the achievable
secrecy rate remains unchanged. % when ${\bf H}_{ii}$, $i=a,b$ are unknown.
%This is an interesting result. It indicates that the network based on FD terminals is robust in

\section{Conclusion}
We have examined the maximum achievable sum secrecy degrees
of freedoms (sum S.D.o.F.) for a multiple-input multiple-output (MIMO) Gaussian wiretap channel,
where \emph{Alice} and \emph{Bob} operate in FD mode, i.e., exchanging
confidential messages at the same time, and a passive eavesdropper who wants to wiretap the confidential messages
from both \emph{Alice} and \emph{Bob}.
We have addressed analytically the sum S.D.o.F. maximization problem.
We also have constructed precoding matrix pairs which achieve the maximum sum S.D.o.F..
Numerical results have revealed the advantages of the proposed secrecy transmission
scheme over existing schemes. The proposed scheme outperforms
all comparison schemes in terms of the achievable average secrecy transmission rate. Since the proposed
scheme provides closed-form precoding matrix pairs, it also has a computational advantage over the WLSM
scheme proposed by \cite{Tung15}. Also, the proposed secrecy transmission scheme is robust to
self-interference, and also robust to the conventional vulnerable positions of \emph{Eve}, i.e., the position
with $x=0$. Further, if properly designed, co-channel interference is helpful in improving the overall
secrecy rate throughput. Finally, apart from the advantage of higher spectral efficiency,
the FD based network also provides a good structure in terms of keeping messages secret.

\appendices

\section{Mathematical Background on GSVD}
Given two full rank matrices ${\bf A}\in {{\mathbb C} ^{N \times M}}$ and
${\bf B}\in {{\mathbb C} ^{N \times K}}$. It holds that
\begin{subequations}
\begin{align}
k\triangleq &  {\rm {rank}}\{[({\bf A}^H)^T, ({\bf B}^H)^T]^T\}=\min\{M+K,N\}, \label{eq001a}\\
p\triangleq & {\rm {dim}}\{ {\rm {span}}({\bf A})^\perp  \cap {\rm {span}}({\bf B}) \}=k- \min \{M,N\}, \label{eq001b} \\
r \triangleq &  {\rm{dim}}\{{\rm {span}}({\bf A})\cap {\rm {span}}({\bf B})^\perp\}=k- \min \{K,N\}, \label{eq001c}\\
s \triangleq & {\rm {dim}} \{{\rm {span}}({\bf A})\cap {\rm {span}}({\bf B})\}=k-p-r \nonumber \\
 &=(\min \{M,N\}+\min \{K,N\}-N)^+. \label{eq001d}  %
\end{align}
\end{subequations}
The GSVD of $({\bf A}^H ,{\bf B}^H )$ \cite{Paige81} returns
unitary matrices ${\bf\Psi}_1 \in {{\mathbb C} ^{M\times M}}$ and ${\bf\Psi}_2 \in {{\mathbb C} ^{K\times K}}$,
%and ${\bf\Psi}_0 \in {{\mathbb C} ^{N\times N}}$,
positive diagonal matrices ${\bf \Lambda}_1\in \mathbb{R}^{s\times s}$
and ${\bf \Lambda}_2\in \mathbb{R}^{s\times s}$, with ${\bf{\Lambda}}_1^H{{\bf{\Lambda}}_1} + {\bf{\Lambda}}_2^H{{\bf{\Lambda}}_2} = {\bf{I}}$,
and a matrix ${\bf X} \in {{\mathbb C} ^{N\times k}}$ with ${\rm{rank}}\{{\bf X}\}=k$, such that
\begin{subequations}
\begin{align}
& {\bf A}[{\bf \Psi}_{11}\ {\bf \Psi}_{12}\ {\bf \Psi}_{13}]=[{\bf X}_{1}{\bf 0}\ {\bf X}_{2}{\bf \Lambda}_1^H\ {\bf X}_{3}{\bf I}_r], \label{eq13a}\\
& {\bf B}[{\bf \Psi}_{21}\ {\bf \Psi}_{22}\ {\bf \Psi}_{23}]=[{\bf X}_{1}{\bf I}_p\ {\bf X}_{2}{\bf \Lambda}_2^H\ {\bf X}_{3}{\bf 0}]. \label{eq13b}
\end{align}
\end{subequations}
Here ${\bf \Psi}_{11}$, ${\bf \Psi}_{12}$ and ${\bf \Psi}_{13}$ are
the first $M-s-r$, the following $s$, and the remaining $r$ columns of of ${\bf \Psi}_1$, respectively;
${\bf \Psi}_{21}$, ${\bf \Psi}_{22}$ and ${\bf \Psi}_{23}$ are the
first $p$, the following $s$, and the remaining $K-s-p$ columns of ${\bf \Psi}_2$, respectively.
In addition, ${\bf X}_{1}$, ${\bf X}_{2}$ and ${\bf X}_{3}$ denote the
first $p$, the following $s$, and the remaining $r$ columns of ${\bf X}$, respectively.

With the GSVD decomposition, one can decompose the union
of ${\rm {span}}({\bf A})$ and ${\rm {span}}({\bf B})$ into three
subspaces, as shown in Fig. 9.

\begin{figure}[!t]
\centering
\includegraphics[width=3in]{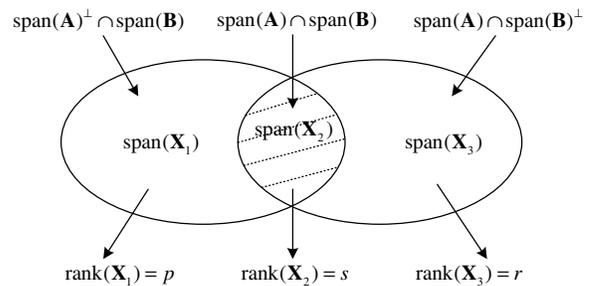}
 %where an .eps filename suffix will be assumed under latex,
% and a .pdf suffix will be assumed for pdflatex; or what has been declared via
\DeclareGraphicsExtensions. \caption{The geometric relationship between the subspaces ${\rm{span}}({\bf A})$ and ${\rm{span}}({\bf B})$.}
\vspace* {-12pt}
\end{figure}

\section{Proof of \emph{Proposition} 1} \label{appA}

By definition, $d_s^{\rm sum}\ge \bar{d}_s^{\rm sum}$.
In what follows, we will show that for any given precoding matrix pair $({\bf V}_a,{\bf V}_b)\in {\mathcal I}$,
one can always find another precoding matrix pair $({\bf V}_a^\prime,{\bf V}_b^\prime) \in \bar {\mathcal I}$,
such that $d_s^a({\bf V}_a,{\bf V}_b) \le d_s^a({\bf V}_a^\prime,{\bf V}_b^\prime)$ and
$d_s^b({\bf V}_a,{\bf V}_b)\le d_s^b({\bf V}_a^\prime,{\bf V}_b^\prime)$,
which indicates that $d_s^{\rm sum}\le \bar{d}^{\rm sum}$.
In this way, we prove that $d_s^{\rm sum}=\bar{d}^{\rm sum}$.

The basic idea for constructing the precoding matrix pair $({\bf V}_a^\prime,{\bf V}_b^\prime) \in \bar {\mathcal I}$
is to exclude the subspaces ${\rm{span}}({\bf G}_a{{\bf V}_a})\setminus{\rm{span}}({\bf G}_b{\bf V}_b)$ and
${\rm{span}}({\bf G}_b{\bf V}_b)\setminus{\rm{span}}({\bf G}_a{{\bf V}_a})$, without decreasing the
S.D.o.F. pair.

To that objective, firstly, by letting ${\bf A}={\bf G}_b{\bf V}_b$ and ${\bf B}={\bf G}_a{\bf V}_a$, and applying
the GSVD decomposition in Appendix A, we arrive at
\begin{subequations}
\begin{align}
&d_s^a({\bf V}_a,{\bf V}_b)= m_1({\bf V}_a,{\bf V}_b)-n_1({\bf V}_a,{\bf V}_b) \nonumber\\
&= m_1({\bf V}_a,{\bf V}_b)
-{\rm {rank}}\{\hat{\bf\Psi}_{21}\} \label{eqa4b}\\
&\le {\rm{dim}}\{{\rm{span}}({\bf{H}}_{ba}{{\bf V}_a}{\hat {\underline{\bf\Psi}}}_{21})\setminus{\rm{span}}({\bf{H}}_{bb}{\bf V}_b)\} \label{eqa4c}\textrm{,}
\end{align}
\end{subequations}
where ${\hat {\underline{\bf\Psi}}}_{21} \triangleq [\hat{\bf\Psi}_{22},\hat{\bf\Psi}_{23}]$.
Here, $\hat{\bf\Psi}_{21}$, $\hat{\bf\Psi}_{22}$ and $\hat{\bf\Psi}_{23}$ correspond to
${\bf\Psi}_{21}$, ${\bf\Psi}_{22}$ and ${\bf\Psi}_{23}$, and arise due to the GSVD of
$({\bf G}_b{\bf V}_b)^H$ and $({\bf G}_a{\bf V}_a)^H$.
%$n_1({\bf V}_a\hat{\bf\Psi}_1,{\bf V}_b\hat{\bf\Psi}_2)
%={\rm{span}}({\bf G}_a{{\bf V}_a}\hat{\bf\Psi}_1)
%/{\rm{span}}({\bf G}_b{\bf V}_b\hat{\bf\Psi}_2)
%={\rm {rank}}\{\hat{\bf\Psi}_{11}\}$, we see that (\ref{eqa4b}) holds true.
(\ref{eqa4c}) holds true, because $m_1({{\bf V}_a},{\bf V}_b)
\le m_1({{\bf V}_a}{\hat {\underline{\bf\Psi}}}_{21},{\bf V}_b)
+m_1({{\bf V}_a}\hat{\bf\Psi}_{21},{\bf V}_b)$ and $m_1({{\bf V}_a}\hat{\bf\Psi}_{21},{\bf V}_b)
\le {\rm {rank}}\{\hat{\bf\Psi}_{21}\}$.

Secondly, since $n_1({\bf V}_a,{\bf V}_b)={\rm{span}}({\bf G}_a{{\bf V}_a}\hat{\bf\Psi}_{21})$, it holds that
$n_2({\bf V}_a,{\bf V}_b)=n_2({\bf V}_a\hat{\underline{\bf\Psi}}_{21},{\bf V}_b)$. Thus,
\begin{subequations}
\begin{align}
&d_s^b({\bf V}_a,{\bf V}_b)= m_2({\bf V}_a,{\bf V}_b)
-n_2({\bf V}_a\hat{\underline{\bf\Psi}}_{21},{\bf V}_b) \label{eqa5a}\\
&\le m_2({\bf V}_a\hat{\underline{\bf\Psi}}_{21},{\bf V}_b)
-n_2({\bf V}_a\hat{\underline{\bf\Psi}}_{21},{\bf V}_b) \label{eqa5b}\\
&=m_2({\bf V}_a\hat{\underline{\bf\Psi}}_{21},{\bf V}_b)
-{\rm {rank}}\{\hat{\bf\Psi}_{13}\} \label{eqa5c}\\
&\le {\rm{dim}}\{{\rm{span}}({\bf{H}}_{ab}{\bf V}_b
\hat{\underline{\bf\Psi}}_{13})\setminus{\rm{span}}({\bf{H}}_{aa}{{\bf V}_a}{\hat {\underline{\bf\Psi}}}_{21})\} \label{eqa5d}\textrm{,}
\end{align}
\end{subequations}
where ${\hat {\underline{\bf\Psi}}}_{13} \triangleq [\hat{\bf\Psi}_{11},\hat{\bf\Psi}_{12}]$.
Here, $\hat{\bf\Psi}_{11}$, $\hat{\bf\Psi}_{12}$ and $\hat{\bf\Psi}_{13}$ correspond to
${\bf\Psi}_{11}$, ${\bf\Psi}_{12}$ and ${\bf\Psi}_{13}$, and together with (\ref{eqa5c}) arise due to the GSVD of
$({\bf G}_b{\bf V}_b)^H$ and $({\bf G}_a{\bf V}_a)^H$.
Since $m_2({{\bf V}_a}{\hat {\underline{\bf\Psi}}}_{21},{\bf V}_b\hat{{\bf\Psi}}_{13})\le {\rm {rank}}\{\hat{\bf\Psi}_{13}\} $
and $m_2({\bf V}_a\hat{\underline{\bf\Psi}}_{21},{\bf V}_b) \le
 m_2({{\bf V}_a}{\hat {\underline{\bf\Psi}}}_{21},{\bf V}_b\hat{\underline{\bf\Psi}}_{13})
+ m_2({{\bf V}_a}{\hat {\underline{\bf\Psi}}}_{21},{\bf V}_b\hat{{\bf\Psi}}_{13})$, one can see that
(\ref{eqa5d}) holds true.

Let ${\bf V}_a^\prime={\bf V}_a{\hat {\underline{\bf\Psi}}}_{21}$,
${\bf V}_b^\prime={\bf V}_b\hat{\underline{\bf\Psi}}_{13}$. By definition, we have
$({\bf V}_a^\prime, {\bf V}_b^\prime) \in  \bar{\mathcal{I}}$, and
\begin{align}
{\rm{span}}({\bf G}_a{\bf V}_a{\hat {\underline{\bf\Psi}}}_{21})
= {\rm{span}}({\bf G}_b{\bf V}_b\hat{\underline{\bf\Psi}}_{13}).\label{eqa6}
\end{align}
Therefore,
\begin{align}
&d_s^a({\bf V}_a^\prime, {\bf V}_b^\prime)=
{\rm{dim}}\{{\rm{span}}({\bf{H}}_{ba}{{\bf V}_a}{\hat {\underline{\bf\Psi}}}_{21})
\setminus{\rm{span}}({\bf{H}}_{bb}{\bf V}_b\hat{\underline{\bf\Psi}}_{13})\}, \nonumber \\
&d_s^b({\bf V}_a^\prime, {\bf V}_b^\prime)=
{\rm{dim}}\{{\rm{span}}({\bf{H}}_{ab}{\bf V}_b\hat{\underline{\bf\Psi}}_{13})
\setminus{\rm{span}}({\bf{H}}_{aa}{{\bf V}_a}{\hat {\underline{\bf\Psi}}}_{21})\}, \nonumber
\end{align}
which together with (\ref{eqa4c}) and (\ref{eqa5d}), indicate that $d_s^a({\bf V}_a,{\bf V}_b) \le d_s^a({\bf V}_a^\prime, {\bf V}_b^\prime)$
and $d_s^b({\bf V}_a,{\bf V}_b) \le d_s^b({\bf V}_a^\prime, {\bf V}_b^\prime)$, respectively.
This completes the proof.

\section{Proof of \emph{Corollary} 1} \label{appB}
%In this section, we will prove that, for any given antenna allocation at \emph{Alice} and \emph{Bob}
%$\bar{d}_s^{\rm sum}(N_a^r,N_b^r)=\hat{d}_s^{\rm sum}(N_a^r,N_b^r)$,
%thus giving the proof of $\bar{d}_s^{\rm sum}=\hat{d}_s^{\rm sum}$.
%By definition, $\bar{d}_s^{\rm sum}(N_a^r,N_b^r) \ge \hat{d}_s^{\rm sum}(N_a^r,N_b^r)$.
In what follows, we will show that for any given matrix pair $({\bf V}_a,{\bf V}_b )\in \bar{\mathcal I}$, one can
always construct another precoding matrix pair $({\bf V}_a^\star,{\bf V}_b^\star)\in \hat{\mathcal I}$ where
${\bf G}_a{\bf V}_a^\star={\bf G}_b{\bf V}_b^\star$,
such that $d_s^a({\bf V}_a^\star,{\bf V}_b^\star)= d_s^a({\bf V}_a ,{\bf V}_b )$
and $d_s^b({\bf V}_a^\star,{\bf V}_b^\star)= d_s^b({\bf V}_a ,{\bf V}_b )$, which indicates that
$\bar{d}_s^{\rm sum}= \hat{d}_s^{\rm sum}$.
%In this way, we prove that $\bar{d}_s^{\rm sum}(N_a^r,N_b^r)=\hat{d}_s^{\rm sum}(N_a^r,N_b^r)$.

For any given $({\bf V}_a ,{\bf V}_b)\in \bar{\mathcal I}$, ${\bf V}_a \in {\mathbb {C}}^{N_a^t \times K_a}$,
${\bf V}_b \in {\mathbb {C}}^{N_b^t \times K_b}$, we should have
\begin{align}
&{\rm{span}}({\bf G}_a{{\bf V}_a} ) = {\rm{span}}({\bf G}_b{\bf V}_b ). \label{eqb1}
\end{align}
Since all channel matrices are assumed to be full rank,
it holds that ${\rm {rank}}\{{\bf G}_b{\bf V}_b\}=\min\{K_b,N_e\}$.
In the sequel, we will consider two distinct cases, i.e., $K_b\ge N_e$ and $K_b< N_e$.

\subsection{For the case of $K_b\ge N_e$}
It holds that ${\rm {rank}}\{{\bf G}_b{\bf V}_b\}=N_e$.
Denote
\begin{align}
&{\bf G}_a{\bf V}_a=\left[{\bf U}_{a1}\ {\bf U}_{a0} \right]
\left[ {\begin{array}{*{20}{c}}
{{{\bf{\Sigma}}_{a1}}}&{\bf{0}}\\
{\bf{0}}&{{{\bf{0}}}}
\end{array}} \right]
\left[ {\begin{array}{*{20}{c}}
{{{\bf{T}}_{a1}^H}}\\
{{{\bf{T}}_{a0}^H}}
\end{array}} \right], \nonumber \\
&{\bf G}_b{\bf V}_b=\left[{\bf U}_{b1}\ {\bf U}_{b0} \right]
\left[ {\begin{array}{*{20}{c}}
{{{\bf{\Sigma}}_{b1}}}&{\bf{0}}\\
{\bf{0}}&{{{\bf{0}}}}
\end{array}} \right]
\left[ {\begin{array}{*{20}{c}}
{{{\bf{T}}_{b1}^H}}\\
{{{\bf{T}}_{b0}^H}}
\end{array}} \right], \nonumber
\end{align}
as the SVD of ${\bf G}_a{\bf V}_a$ and ${\bf G}_b{\bf V}_b$, respectively.
Then, the matrices ${\bf G}_a{\bf V}_a{\bf{T}}_{a1}$ and ${\bf G}_b{\bf V}_b{\bf{T}}_{b1}$ are invertible.

Due to (\ref{eqb1}), it holds that ${\rm{span}}({\bf G}_a{{\bf V}_a} {\bf{T}}_{a1}) = {\rm{span}}({\bf G}_b{\bf V}_b {\bf{T}}_{b1})$.
Thus, there exists some invertible matrix $\bf A$, such that ${\bf G}_a{{\bf V}_a}{\bf{T}}_{a1}{\bf A} ={\bf G}_b{\bf V}_b{\bf{T}}_{b1} $.
\begin{enumerate}
\item If $K_b \ge K_a$, let ${\bf V}_a^\star={\bf V}_a[{\bf{T}}_{a1}{\bf A} \ {\bf{T}}_{a0} \ {\bf 0}_{K_a \times (K_b-K_a)}]$
and ${\bf V}_b^\star={\bf V}_b[{\bf{T}}_{b1} \ {\bf{T}}_{b0}]$.
\item If $K_b < K_a$, let ${\bf V}_b^\star={\bf V}_b[{\bf{T}}_{b1} \ {\bf{T}}_{b0}\ {\bf 0}_{K_b \times (K_a-K_b)}]$
and ${\bf V}_a^\star={\bf V}_a[{\bf{T}}_{a1}{\bf A} \ {\bf{T}}_{a0}]$.
\end{enumerate}
It can be verified that ${\bf G}_a{{\bf V}_a}^\star ={\bf G}_b{\bf V}_b^\star$ holds true for both cases.
Moreover, since both $[{\bf{T}}_{a1}{\bf A} \ {\bf{T}}_{a0}]$ and $[{\bf{T}}_{b1} \ {\bf{T}}_{b0}]$ are
invertible matrices, it holds that $d_s^a({\bf V}_a^\star,{\bf V}_b^\star)= d_s^a({\bf V}_a ,{\bf V}_b )$
and $d_s^b({\bf V}_a^\star,{\bf V}_b^\star)= d_s^b({\bf V}_a ,{\bf V}_b )$.

\subsection{For the case of $K_b < N_e$}
It holds that ${\bf G}_a{\bf V}_a$ and ${\bf G}_b{\bf V}_b$ are full column rank.
Let ${\bf P}_v$ and ${\bf P}_w$ be the projection matrix of ${\bf G}_a{\bf V}_a$ and ${\bf G}_b{\bf V}_b$, respectively, i.e.,
\begin{subequations}
\begin{align}
&{\bf P}_v={\bf G}_a{{\bf V}_a} (({\bf G}_a{{\bf V}_a} )^H{\bf G}_a{{\bf V}_a} )^{-1}({\bf G}_a{{\bf V}_a} )^H,\nonumber \\
&{\bf P}_w={\bf G}_b{\bf V}_b (({\bf G}_b{\bf V}_b )^H{\bf G}_b{\bf V}_b )^{-1}({\bf G}_b{\bf V}_b )^H. \nonumber
\label{eqb2b}
\end{align}
\end{subequations}
Let ${\bf V}_a^\star={\bf V}_a{\bf B}$, with ${\bf B} = (({\bf G}_a{{\bf V}_a} )^H{\bf G}_a{{\bf V}_a} )^{-1}({\bf G}_a{{\bf V}_a} )^H$.
Let ${\bf V}_b^\star={\bf V}_b{\bf C}$, with ${\bf C}= (({\bf G}_b{\bf V}_b )^H{\bf G}_b{\bf V}_b )^{-1}({\bf G}_b{\bf V}_b )^H$.

By (\ref{eqb1}), it holds that ${\bf P}_v={\bf P}_w$. Thus, ${\bf G}_a{{\bf V}_a}{\bf B}={\bf G}_b{\bf V}_b{\bf C}$. Moreover,
since both ${\bf B}$ and ${\bf C}$ are full row rank, it holds that $d_s^a({\bf V}_a^\star,{\bf V}_b^\star)= d_s^a({\bf V}_a ,{\bf V}_b )$
and $d_s^b({\bf V}_a^\star,{\bf V}_b^\star)= d_s^b({\bf V}_a ,{\bf V}_b )$.

This completes the proof.

\bibliography{mybib}

% Generated by IEEEtran.bst, version: 1.13 (2008/09/30)
\begin{thebibliography}{10}
\providecommand{\url}[1]{#1}
\csname url@samestyle\endcsname
\providecommand{\newblock}{\relax}
\providecommand{\bibinfo}[2]{#2}
\providecommand{\BIBentrySTDinterwordspacing}{\spaceskip=0pt\relax}
\providecommand{\BIBentryALTinterwordstretchfactor}{4}
\providecommand{\BIBentryALTinterwordspacing}{\spaceskip=\fontdimen2\font plus
\BIBentryALTinterwordstretchfactor\fontdimen3\font minus
  \fontdimen4\font\relax}
\providecommand{\BIBforeignlanguage}[2]{{%
\expandafter\ifx\csname l@#1\endcsname\relax
\typeout{** WARNING: IEEEtran.bst: No hyphenation pattern has been}%
\typeout{** loaded for the language `#1'. Using the pattern for}%
\typeout{** the default language instead.}%
\else
\language=\csname l@#1\endcsname
\fi
#2}}
\providecommand{\BIBdecl}{\relax}
\BIBdecl

\bibitem{Sabharwal14}
A.~Sabharwal, P.~Schniter, and et. al., ``In-band full-duplex wireless:
  challenges and opportunities,'' \emph{IEEE J. Sel. Areas Commun.}, vol.~32,
  no.~9, pp. 1637--1652, Sep. 2014.

\bibitem{Lingyang15}
L.~Song, Y.~Li, and Z.~Han, ``Resource allocation in full-duplex communications
  for future wireless networks,'' \emph{IEEE Wireless Commun.}, vol.~22, no.~4,
  pp. 88--96, Aug. 2015.

\bibitem{Wei12}
W.~Li, M.~Ghogho, B.~Chen, and C.~Xiong, ``Secure communication via sending
  artificial noise by the receiver: Outage secrecy capacity/region analysis,''
  \emph{IEEE Commun. Lett.}, vol.~16, no.~10, pp. 1628--1631, Oct. 2012.

\bibitem{Gaojie15}
G.~Chen, Y.~Gong, P.~Xiao, and J.~A. Chambers, ``Physical layer network
  security in the full-duplex relay system,'' \emph{IEEE Trans. Inf. Forens.
  Security}, vol.~10, no.~3, pp. 574--583, Mar. 2015.

\bibitem{Fengchao16}
F.~Zhu, F.~Gao, T.~Zhang, K.~Sun, and M.~Yao, ``Physical-layer security for
  full duplex communications with self-interference mitigation,'' \emph{IEEE
  Trans. Wireless Commun.}, vol.~15, no.~1, pp. 329--340, Jan. 2016.

\bibitem{Negi05}
R.~Negi and S.~Goel, ``Secret communication using artificial noise,'' in
  \emph{Proc. IEEE VTC-2005-Fall}, Texas, USA, 2005, pp. 1906--1910.

\bibitem{qiang13}
Q.~Li, M.~Hong, H.-T. Wai, and et. al, ``Transmit solutions for {MIMO} wiretap
  channels using alternating optimization,'' \emph{IEEE J. Sel. Areas Commun.},
  vol.~31, no.~9, pp. 1714--1727, May 2013.

\bibitem{Tang08}
X.~Tang, R.~Liu, P.~Spasojevic, and H.~V. Poor, ``The {Gaussian} wiretap
  channel with a helping interferer,'' in \emph{Proc. IEEE ISIT}, Ontario,
  Canada, Jul. 2008, pp. 389--393.

\bibitem{LunDong10}
L.~Dong, Z.~Han, A.~P. Petropulu, and H.~V. Poor, ``Improving wireless physical
  layer security via cooperating relays,'' \emph{IEEE Trans. Signal Process.},
  vol.~58, no.~3, pp. 1875--1888, Mar. 2010.

\bibitem{Zheng11}
G.~Zheng, L.-C. Choo, and K.-K. Wong, ``Optimal cooperative jamming to enhance
  physical layer security using relays,'' \emph{IEEE Trans. Signal Process.},
  vol.~59, no.~3, pp. 1317--1322, Mar. 2011.

\bibitem{Han11}
H.-T. Chiang and J.~S. Lehnert, ``Optimal cooperative jamming for security,''
  in \emph{Proc. IEEE MILCOM}, Baltimore, MD, Nov. 2011, pp. 125--130.

\bibitem{Swindlehurst11}
S.~A.~A. Fakoorian and A.~L. Swindlehurst, ``Solutions for the {MIMO Gaussian}
  wiretap channel with a cooperative jammer,'' \emph{IEEE Trans. Signal
  Process.}, vol.~59, no.~10, pp. 5013--5022, Oct. 2011.

\bibitem{zheng151}
Z.~Chu, K.~Cumanan, Z.~Ding, M.~Johnston, and S.~Y. Goff, ``Secrecy rate
  optimizations for a {MIMO} secrecy channel with a cooperative jammer,''
  \emph{IEEE Trans. Veh. Technol.}, vol.~64, no.~5, pp. 1833--1847, May 2015.

\bibitem{Lingxiang16}
L.~Li, Z.~Chen, J.~Fang, and A.~Petropulu, ``Secrecy degrees of freedom of a
  {MIMO} {G}aussian wiretap channel with a cooperative jammer,'' in \emph{Proc.
  IEEE ICASSP}, Shanghai, China, Mar. 2016, pp. 3486--3490.

\bibitem{Gan13}
G.~Zheng, I.~Krikidis, J.~Li, A.~P. Petropulu, and B.~Ottersten, ``Improving
  physical layer secrecy using full-duplex jamming receivers,'' \emph{IEEE
  Trans. Signal Process.}, vol.~61, no.~20, pp. 4962--4974, Oct. 2013.

\bibitem{Yongkai14}
Y.~Zhou, Y.~Zhu, and Z.~Xue, ``Enhanced {MIMOME} wiretap channel via adopting
  full-duplex {MIMO} radios,'' in \emph{Proc. IEEE Globecom}, Austin, TX, USA,
  Dec. 2014, pp. 3320--3325.

\bibitem{Yongkai14spl}
Y.~Zhou, Z.~Xiang, Y.~Zhu, and Z.~Xue, ``Application of full-duplex wireless
  technique into secure {MIMO} communication: Achievable secrecy rate based
  optimization,'' \emph{IEEE Signal Process. Lett.}, vol.~21, no.~7, pp.
  804--808, Jul. 2014.

\bibitem{LLX16}
L.~Li, Z.~Chen, D.~Zhang, and J.~Fang, ``A full-duplex {Bob} in the {MIMO}
  gaussian wiretap channel: scheme and performance,'' \emph{IEEE Signal
  Process. Lett.}, vol.~23, no.~1, pp. 107--111, Jan. 2016.

\bibitem{Yaping15}
Y.~Wan, Q.~Li, Q.~Zhang, and J.~Qin, ``Optimal and suboptimal full-duplex
  secure beamforming designs for {MISO} two-way communications,'' \emph{IEEE
  Wireless Commun. Lett.}, vol.~4, no.~5, pp. 493--496, Oct. 2015.

\bibitem{Renhai16}
R.~Feng, Q.~Li, Q.~Zhang, and J.~Qin, ``Robust secure beamforming in {MISO}
  full-duplex two-way secure communications,'' \emph{IEEE Trans. Veh.
  Technol.}, vol.~65, no.~1, pp. 408--414, Jan. 2016.

\bibitem{Cepheli14}
O.~Cepheli, S.~Tedik, and G.~K. Kurt, ``A high data rate wireless communication
  system with improved secrecy: Full duplex beamforming,'' \emph{IEEE Commun.
  Lett.}, vol.~18, no.~6, pp. 493--496, Jun. 2014.

\bibitem{Kalantari15}
A.~Kalantari, S.~Maleki, G.~Zheng, S.~Chatzinotas, and B.~Ottersten, ``Joint
  power control in wiretap interference channels,'' \emph{IEEE Trans. Wireless
  Commun.}, vol.~14, no.~7, pp. 3810--3823, Jul. 2015.

\bibitem{Lv15}
T.~Lv, H.~Gao, and S.~Yang, ``Secrecy transmit beamforming for heterogeneous
  networks,'' \emph{IEEE J. Sel. Areas Commun.}, vol.~33, no.~6, pp.
  1154--1170, Jun. 2015.

\bibitem{Lingxiang162}
L.~Li, A.~Petropulu, Z.~Chen, and J.~Fang, ``Improving wireless physical layer
  security via exploiting co-channel interference,'' \emph{to appear in IEEE J.
  Sel. Topics Signal Process., 2016}.

\bibitem{Xie15}
J.~Xie and S.~Ulukus, ``Secure degrees of freedom of {K-User Gaussian}
  interference channels: A unified view,'' \emph{IEEE Trans. Inf. Theory},
  vol.~61, no.~5, pp. 2647--2661, May 2015.

\bibitem{Xie142}
------, ``Secure degrees of freedom region of the gaussian interference channel
  with secrecy constraints,'' in \emph{Proc. IEEE ITW}, Hobart, Tasmania,
  Australia, Nov. 2014, pp. 361--365.

\bibitem{Koyluoglu112}
O.~O. Koyluoglu, H.~E. Gamal, L.~Lai, and H.~V. Poor, ``Interference alignment
  for secrecy,'' \emph{IEEE Trans. Inf. Theory}, vol.~57, no.~6, pp.
  3323--3332, Jun. 2011.

\bibitem{Tung15}
T.~T. Vu, H.~H. Kha, T.~Q. Duong, and N.-S. Vo, ``On the interference alignment
  designs for secure multiuser {MIMO} systems,'' \emph{2015, Available:
  http://arxiv.org/abs/1508.00349}.

\bibitem{OggierBabak11}
F.~Oggier and B.~Hassibi, ``The secrecy capacity of the {MIMO} wiretap
  channel,'' \emph{IEEE Trans. Inf. Theory}, vol.~57, no.~8, pp. 4961--4971,
  Aug. 2011.

\bibitem{Liang09}
Y.~Liang, G.~Kramer, H.~V. Poor, and S.~{Shamai (Shitz)}, ``Compound wiretap
  channels,'' \emph{EURASIP J. Wireless Commun. and Net.}, vol. 2009, no.~5,
  pp. 1--13, Mar. 2009.

\bibitem{Inaltekin09}
H.~Inaltekin, M.~Chiang, H.~V. Poor, and S.~B. Wicker, ``On unbounded path-loss
  models: Effects of singularity on wireless network performance,'' \emph{IEEE
  J. Sel. Areas Commun.}, vol.~27, no.~7, pp. 1078--1092, Sep. 2009.

\bibitem{Nosrat11}
B.~Nosrat-Makouei, J.~G. Andrews, and R.~W. Heath, ``{MIMO} interference
  alignment over correlated channels with imperfect {CSI},'' \emph{IEEE Trans.
  Signal Process.}, vol.~59, no.~6, pp. 2783--2794, Jun. 2011.

\bibitem{Paige81}
C.~Paige and M.~A. Saunders, ``Towards a generalized singular value
  decomposition,'' \emph{SIAM J. Numer. Anal.}, vol.~18, no.~3, pp. 398--405,
  Jun. 1981.

\end{thebibliography}
\bibliographystyle{IEEEtran}

\end{document}